\documentclass[submission,copyright,creativecommons]{eptcs}

\usepackage{graphicx}
\usepackage{amsthm}
\usepackage{amssymb}
\usepackage{breakurl}

\newcommand{\Suff}{\textit{Suff}}
\newcommand{\Pref}{\textit{Pref}}
\newcommand{\Fact}{\textit{Fact}}

\renewcommand{\alph}{\textit{alph}}

\renewcommand{\epsilon}{\varepsilon}

\begin{document}

\title{A Classification of Trapezoidal Words}

\author{Gabriele Fici 
\institute{Laboratoire I3S, CNRS \& Universit\'e de Nice-Sophia Antipolis\\ 2000, Route des Lucioles - 06903 Sophia Antipolis cedex, France}  
\email{fici@i3s.unice.fr}
}

\newtheorem{theorem}{Theorem}
\newtheorem{proposition}[theorem]{Proposition}
\newtheorem{lemma}[theorem]{Lemma}
\newtheorem{corollary}[theorem]{Corollary}
\newtheorem{definition}{Definition}
\newtheorem{example}{Example}
\newtheorem{problem}{Problem}
\newtheorem{remark}{Remark}

\def\titlerunning{A Classification of Trapezoidal Words}
\def\authorrunning{G. Fici}

\maketitle

\begin{abstract}
Trapezoidal words are finite words having at most $n+1$ distinct factors of length $n$, for every $n\ge 0$. They encompass finite Sturmian words. We distinguish trapezoidal words into two disjoint subsets: open and closed trapezoidal words. A trapezoidal word is \emph{closed} if its longest repeated prefix has exactly two occurrences in the word, the second one being a suffix of the word. Otherwise it is \emph{open}. We show that open trapezoidal words are all primitive and that closed trapezoidal words are all Sturmian. We then show that trapezoidal palindromes are closed (and therefore Sturmian). This allows us to characterize the special factors of Sturmian palindromes. We end with several open problems.
\end{abstract}

\textbf{Keywords:} trapezoidal words, Sturmian words, special factors, palindromes.

\section{Introduction}\label{sec:intro}

In combinatorics on words, the most famous class of infinite words is certainly that of Sturmian words. Sturmian words code digital straight line in the discrete plane having irrational slope. They are characterized by the fact that they have exactly $n+1$ factors of length $n$, for every $n\ge 0$.

It is well known (\cite{LothaireAlg}, Proposition 2.1.17) that a finite word $w$ is a factor of some Sturmian word if and only if $w$ is a binary \emph{balanced} word, that is, there exists a letter $a$ such that for every pair of factors of $w$  of the same length, $u$ and $v$, one has that $u$ and $v$ contain the same number of $a$'s up to one, i.e.,
\begin{equation}\label{eq:bal}
 ||u|_{a}-|v|_{a}|\le 1.
\end{equation}

Finite Sturmian words (finite factors of Sturmian words) have the property that they have \emph{at most} $n+1$ factors of length $n$, for every $n\ge 0$. However, this property does not characterize them, as shown by the word $w=aaabab$, which is not Sturmian since the factors $aaa$ and $bab$ do not verify (\ref{eq:bal}).

The set of finite words defined by the property that they have at most $n+1$ factors of length $n$, for every $n\ge 0$, is called the set of \emph{trapezoidal words}. 

The name comes from the fact that the graph of the complexity function of these words\footnote{The complexity function of a word $w$ is the function that counts the number of distinct factors of each length in $w$.}  defines a regular trapezium. Trapezoidal words have been defined by de Luca, who observed that Sturmian words are trapezoidal \cite{Del99}. The non-Sturmian trapezoidal words have been then characterized by D'Alessandro \cite{Dal02}.

In this paper, we distinguish trapezoidal words into two distinct classes, accordingly with the definition below.

\begin{definition}\label{def:closed}
 Let $w$ be a finite word over an alphabet $\Sigma$. We say that $w$ is \emph{closed} if the longest repeated prefix of $w$ has exactly two occurrences in $w$, the second one being a suffix of $w$.
 
 A word which is not closed is called \emph{open}.
\end{definition}

For example, the word $aabbaa$ is closed, whereas the word $aabbaaa$ is open. 

\begin{remark}\label{rem:pl}
The notion of closed word is equivalent to that of \emph{periodic-like} word \cite{CaDel01a}. A word $w$ is periodic-like if its longest repeated prefix does not appear in $w$ followed by different letters.

The notion of closed word is also related to the concept of \emph{complete return} to a factor $u$ in a word $w$, as considered in \cite{GlJuWiZa09}. A complete return to $u$ in $w$ is any factor of $w$ having exactly two occurrences of $u$, one as a prefix and one as a suffix. Therefore, $w$ is a closed word if and only if $w$ is a complete return to its longest repeated prefix.

\end{remark}

In this paper, we distinguish trapezoidal words in open and closed. This allows us to establish some further properties of trapezoidal words. More precisely, we have that open trapezoidal words are all primitive (Lemma \ref{lem:oprim}), while closed trapezoidal words are all Sturmian (Proposition \ref{prop:cloStur}). We characterize open trapezoidal words by means of their special factors (Proposition \ref{prop:openspe}) and show that the longest special factor of a closed trapezoidal word is a central word (Lemma \ref{lem:clo1}). We then show that trapezoidal palindromes are closed (Theorem \ref{theor:palclo}). This allows us to characterize the special factors of Sturmian palindromes (Corollary \ref{cor:sturpal}).

\section{Trapezoidal Words}\label{sec:nota}

An \textit{alphabet}, denoted by $\Sigma$, is a finite set of symbols. A \textit{word} over $\Sigma$ is a finite sequence of symbols from $\Sigma$. We denote by $\alph(w)$ the subset of the alphabet $\Sigma$ constituted by the letters appearing in $w$. 

The \textit{length} of a word $w$ is denoted by $|w|$ and is the number of its symbols. We denote by $w_{i}$ the $i$-th letter of $w$. For a letter $a\in \Sigma$, we denote by $|w|_{a}$ the number of $a$'s appearing in $w$.
 
 The set of all words over $\Sigma$ is denoted by $\Sigma^*$. The set of all words over $\Sigma$ having length $n$ is denoted by $\Sigma^n$. The empty word has length zero and is denoted by $\varepsilon$. 

Let $w=a_1a_2\cdots a_n$, $n>0$, be a non-empty word over the alphabet $\Sigma$. The word $\tilde{w}=a_{n}a_{n-1}\cdots a_{1}$ is called the \emph{reversal} of $w$. A \emph{palindrome} is a word $w$ such that $\tilde{w}=w$.

A \textit{prefix} of $w$ is any word $v$ such that $v=\varepsilon$ or $v$ is of the form $v=a_1a_2\cdots a_i$, with $1\leq i\leq n$. 
A \textit{suffix} of $w$ is any word $v$ such that $v=\varepsilon$ or $v$ is of the form $v=a_ia_{i+1}\cdots a_n$, with $1\leq i\leq n$.  
A \textit{factor} of $w$ is a prefix of a suffix of $w$ (or, equivalently, a suffix of a
prefix of $w$). Therefore, a factor of $w$ is any word $v$ such that $v=\varepsilon$ or $v$ is of the form $v=a_ia_{i+1}\cdots a_j$, with $1\leq i\leq j\leq n$. A factor of a word $w$ is \emph{internal} if it is not a prefix nor a suffix of $w$.

We denote by $\Pref(w)$, $\Suff(w)$ and $\Fact(w)$, respectively, the set of prefixes, suffixes and factors of the word $w$.

The \emph{factor complexity} of a word $w$ is the function defined by $f_{w}(n)=|\Fact(w)\cap \Sigma^n|$, for every $n\geq 0$. Notice that $f_{w}(1)$ is the number of distinct letters occurring in $w$. A \emph{binary word} is a word $w$ such that $f_{w}(1)=|\alph(w)|=2$.

A factor $u$ of $w$ is \emph{left special} if there exist $a,b\in \Sigma$, $a\neq b$, such that $au,bu\in \Fact(w)$. A factor $u$ of $w$ is \emph{right special} if there exist $a,b\in \Sigma$, $a\neq b$, such that $ua,ub\in \Fact(w)$. A factor $u$ of $w$ is \emph{bispecial} if it is both left and right special. 

For example, let $w=aabbb$. The left special factors of $w$ are $\epsilon$, $b$ and $bb$. The right special factors of $w$ are $\epsilon$ and $a$. Therefore, the only bispecial factor of $w$ is $\epsilon$.

A \textit{period} for the word $w$ is a positive integer $p$, with $0<p\leq |w|$, such that
$w_{i}=w_{i+p}$ for every $i=1,\ldots ,|w|-p$. Since $|w|$ is always a period for $w$, we have that every non-empty word has at least one period. We can unambiguously define \textit{the} period of the word $w$ as the smallest of its periods. For example the period of $w=aabaaba$ is 3.

The \emph{fractional root} $z_{w}$ of a word $w$ is its prefix whose length is equal to the 
period of $w$.  So for example the fractional root of $w=aabaaba$ is $z_{w}=aab$.

A word $w$ is \emph{a power} if there exists a non-empty word $u$ and an integer $n>1$ such that $w=u^{n}$. A word which is not a power is called \emph{primitive}.

The following parameters have been introduced by de Luca \cite{Del99}:

\begin{definition} Let $w$ be a word over $\Sigma$. We denote by $H_w$ the minimal length of a prefix of $w$ which occurs only once in $w$. We denote by $K_w$ the minimal length of a suffix of $w$ which occurs only once in $w$.
\end{definition}

\begin{definition}
Let $w$ be a word over $\Sigma$. We denote by $L_w$ the minimal length for which there are no left special factors of that length in $w$. Analogously, we denote by $R_w$ the minimal length for which there are no right special factors of that length in $w$.
\end{definition}

\begin{example}\label{ex:1}
Let $w=aaababa$. The longest left special factor of $w$ is $aba$, and it is also the longest repeated suffix of $w$; the longest right special factor of $w$ is $aa$, and it is also the longest repeated prefix of $w$. Thus, we have $L_{w}=4$, $K_{w}=4$, $R_{w}=3$ and $H_{w}=3$.
\end{example}

Notice that for every word $w$ such that $|\alph(w)|> 1$, the values $H_{w},K_{w},L_{w}$ and $R_{w}$ are positive integers. Moreover, one has $f_{w}(R_{w})=f_{w}(L_{w})$ and $\max\{R_{w},K_{w}\}=\max\{L_{w},H_{w}\}$ (\cite{Del99}, Corollary 4.1).

The following proposition is from de Luca (\cite{Del99}, Proposition 4.2).

\begin{proposition}\label{prop:del}
Let w be a word of length $|w|$ such that $|\alph(w)|> 1$ and set
$m_{w} = \min\{R_{w}, K_{w}\}$ and $M_{w} = \max\{R_{w}, K_{w}\}$. The factor complexity $f_{w}$ is strictly increasing
in the interval $[0,m_{w}]$, is nondecreasing in the interval $[m_{w},M_{w}]$ and strictly decreasing
in the interval $[M_{w}, |w|]$. Moreover, for $i$ in the interval $[M_{w},|w|]$, one has $f_{w}(i + 1) = f_{w}(i) - 1$. If $R_{w}<K_{w}$, then $f_{w}$ is constant in the interval $[m_{w},M_{w}]$.
\end{proposition}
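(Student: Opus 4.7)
The plan is to derive Proposition \ref{prop:del} from a single counting identity linking consecutive values of the complexity. For each factor $u$ of $w$, let $r_w(u)$ denote the number of letters $a \in \Sigma$ with $ua \in \Fact(w)$, and let $\chi_n \in \{0,1\}$ indicate whether the length-$n$ suffix of $w$ has no other occurrence in $w$. Since every length-$(n+1)$ factor has a unique length-$n$ prefix, summing $r_w(u)$ over all $u \in \Fact(w) \cap \Sigma^n$ yields $f_w(n+1)$. I would then verify the trichotomy: $r_w(u)=0$ iff $u$ is the length-$n$ suffix of $w$ and has no other occurrence; $r_w(u)=1$ for extendable, non-right-special $u$; and $r_w(u) \geq 2$ iff $u$ is right-special. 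Since a unique suffix cannot be right-special (it would then have at least two occurrences), these exceptional cases do not overlap, and one obtains
$$f_w(n+1) - f_w(n) \;=\; R'(n)\;-\;\chi_n,$$
where $R'(n)$ denotes the sum of $r_w(u)-1$ over all right-special factors $u$ of length $n$.

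Next I would establish two monotonicity facts. If $v$ is right-special of length $n+1$, then the length-$n$ suffix of $v$ is right-special too, since the two distinct right-extensions of $v$ restrict to two distinct right-extensions of that suffix. Hence the set of lengths admitting a right-special factor is an initial segment $\{0,1,\ldots,R_w-1\}$, so $R'(n) \geq 1$ exactly when $n < R_w$. Symmetrically, if the length-$n$ suffix of $w$ has a unique occurrence, so does the length-$(n+1)$ suffix, giving $\chi_n = 1$ iff $n \geq K_w$. The hypothesis $|\alph(w)|>1$ ensures the empty word is right-special, so the analysis behaves correctly starting at $n=0$.

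The three regimes then fall out of the identity. For $n < m_w$ both $n < R_w$ and $n < K_w$ hold, so $R'(n)-\chi_n \geq 1$, giving strict increase on $[0,m_w]$. For $n \geq M_w$ both conditions fail, so $R'(n)-\chi_n = 0-1 = -1$, yielding both the strict decrease on $[M_w,|w|]$ and the exact step $f_w(i+1)=f_w(i)-1$. In the intermediate range $m_w \leq n < M_w$, exactly one of the two conditions still holds: if $R_w < K_w$ then $R'(n)=0=\chi_n$ throughout, so $f_w$ is constant (the last clause of the proposition); if $K_w \leq R_w$ then $R'(n) \geq 1$ and $\chi_n = 1$, giving a nonnegative difference. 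The principal technical care concerns the identity itself, especially the disjointness of the unique-suffix and right-special cases; once that is settled, the rest is bookkeeping driven by the two monotonicity observations.
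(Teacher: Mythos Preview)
The paper does not actually prove this proposition: it is stated with attribution to de~Luca (\cite{Del99}, Proposition~4.2) and no argument is supplied. So there is no ``paper's own proof'' to compare against.

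That said, your argument is correct and is essentially the standard one underlying de~Luca's result. The identity $f_w(n+1)-f_w(n)=R'(n)-\chi_n$ is exactly the right engine: it follows from grouping length-$(n{+}1)$ factors by their length-$n$ prefix, and your trichotomy for $r_w(u)$ is accurate, including the crucial disjointness observation (a right-special factor has at least two occurrences, so it cannot be the unique suffix). Your two monotonicity facts---that suffixes of right-special factors are right-special, and that extensions of non-repeated suffixes are non-repeated---are precisely what justifies using $R_w$ and $K_w$ as thresholds, and the three-regime case analysis then reads off the proposition verbatim. One cosmetic point: in the middle regime you write ``if $K_w\le R_w$'', but the case $K_w=R_w$ makes $[m_w,M_w]$ degenerate; writing $K_w<R_w$ there would be cleaner, though nothing is wrong as stated.
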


Proposition \ref{prop:del} allows one to give the following definition.

\begin{definition}
A non-empty word $w$ is \emph{trapezoidal} if 

\begin{itemize}
 \item $f_{w}(i)=i+1$ \ \ for \ \ $0\le i \le m_{w}$,
 \item $f_{w}(i+1)=f_{w}(i)$ \ \ for \ \  $m_{w} \le i \le M_{w}-1$,
 \item $f_{w}(i+1)=f_{w}(i)-1$ \ \ for \ \  $M_{w} \le i \le|w|$.
\end{itemize}
\end{definition}

Trapezoidal words have been considered for the first time by de Luca \cite{Del99}.  
The name \emph{trapezoidal} has been given by D'Alessandro \cite{Dal02}. The choice of the name is motivated by the fact that for these words the graph of the complexity function defines a regular trapezium (possibly degenerated in a triangle). 

\begin{figure}
\begin{center}
\includegraphics[height=60mm]{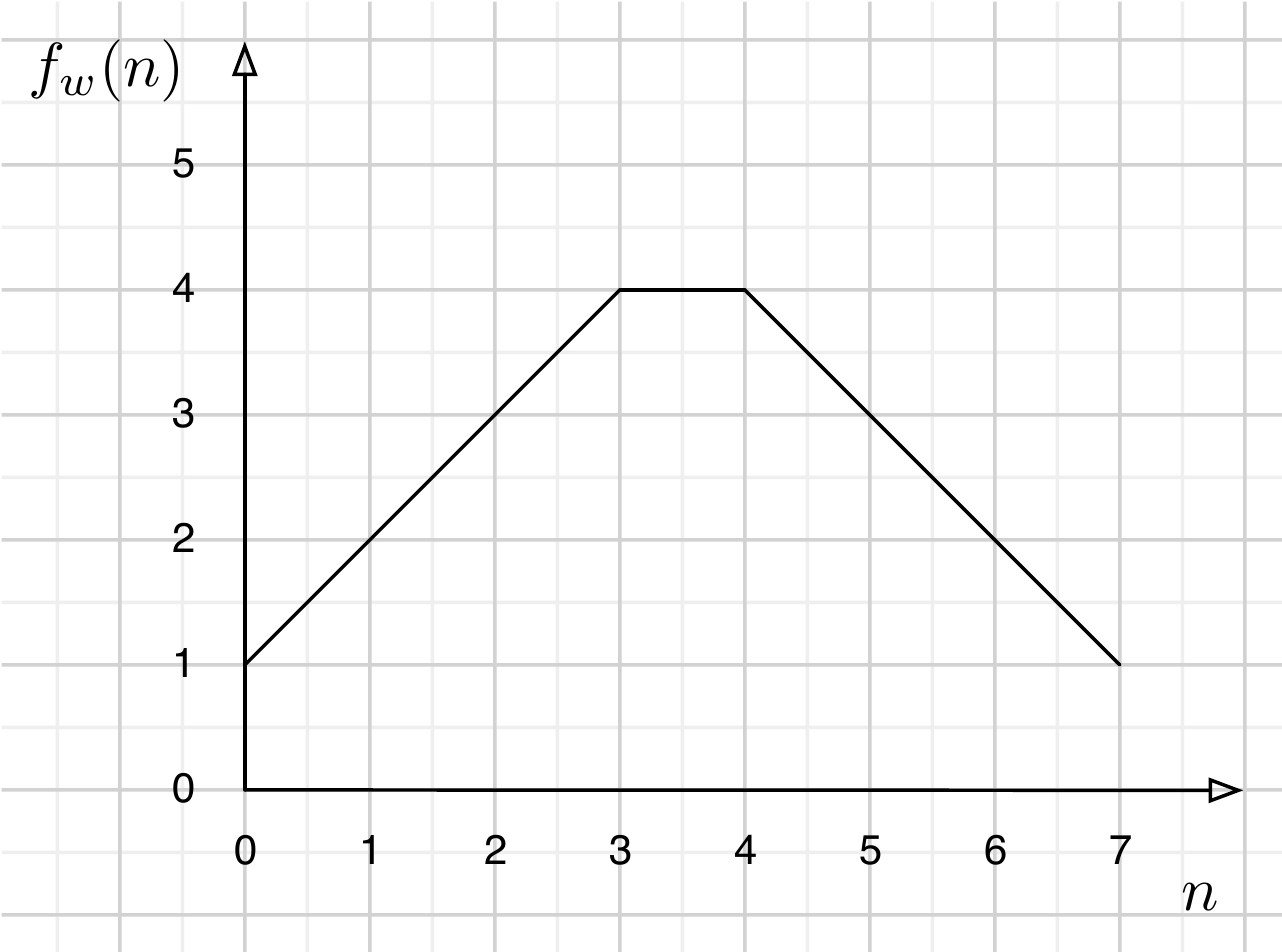}
\caption{The graph of the complexity function of the trapezoidal word $w=aaababa$. One has $m_{w}=\min \{R_{w},K_{w}\}=3$ and $M_{w}=\max \{R_{w},K_{w}\}=4$.}
\label{fig:graph}
\end{center}
\end{figure}

Notice that by definition a trapezoidal word is a binary word. 

\begin{example}
 The word $w=aaababa$ considered in Example \ref{ex:1} is trapezoidal. See Fig.\ \ref{fig:graph}.
\end{example}

In the following proposition we gather some characterizations of trapezoidal words. 

\begin{proposition}\label{prop:trap}
Let $w$ be a binary word. The following conditions are equivalent:
 
\begin{enumerate}
\item $w$ is trapezoidal;
 \item $|w|=L_{w}+H_{w}$;
 \item $|w|=R_{w}+K_{w}$;
 \item $w$ has at most one left special factor of length $n$ for every $n\geq 0$;
 \item $w$ has at most one right special factor of length $n$ for every $n\geq 0$;
 \item $w$ has at most $n+1$ distinct factors of length $n$ for every $n\geq 0$;
 \item $|f_{w}(n+1)-f_{w}(n)|\leq 1$ for every $n\geq 0$.
\end{enumerate}
\end{proposition}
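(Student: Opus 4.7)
The plan is to organize the proof as a cluster of equivalences around condition (1), using Proposition~\ref{prop:del} together with the first-difference identity
\[
f_w(n+1)-f_w(n) \;=\; r_w(n)-k_w(n),
\]
valid for any binary word $w$, where $r_w(n)$ counts the right special factors of length $n$ in $w$ and $k_w(n)\in\{0,1\}$ equals $1$ iff $n\ge K_w$. It is obtained from $f_w(n+1)=\sum_{|u|=n}\mathrm{deg}_r(u)$ after observing that over a binary alphabet $\mathrm{deg}_r(u)\in\{0,1,2\}$, with $\mathrm{deg}_r(u)=0$ for the length-$n$ suffix when $n\ge K_w$ and $\mathrm{deg}_r(u)=2$ exactly when $u$ is right special.

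First I prove $(3)\Leftrightarrow(5)$: summing the identity over $n\in[0,|w|-1]$ telescopes to $\sum_n r_w(n)=|w|-K_w$, and since $r_w(n)\ge 1$ precisely for $n<R_w$, this sum is at least $R_w$, with equality iff $r_w(n)\le 1$ for every $n$; equality is precisely $|w|=R_w+K_w$, i.e.\ (3). For $(3)\Rightarrow(1)$: under $|w|=R_w+K_w=m_w+M_w$, Proposition~\ref{prop:del} gives $f_w(M_w)=|w|-M_w+1=m_w+1$, and combining $f_w(m_w)\ge m_w+1$ from strict increase out of $f_w(0)=1$ with $f_w(m_w)\le f_w(M_w)=m_w+1$ from nondecrease on $[m_w,M_w]$ squeezes the whole interval to $m_w+1$, yielding $f_w(i)=i+1$ on $[0,m_w]$, constancy on $[m_w,M_w]$, and decrease by one on $[M_w,|w|]$; the reverse reads $|w|=m_w+M_w$ directly off the shape. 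Applying the same argument to the reversal $\tilde{w}$ (for which $f_{\tilde{w}}=f_w$, $R_{\tilde{w}}=L_w$ and $K_{\tilde{w}}=H_w$) gives $(1)\Leftrightarrow(2)\Leftrightarrow(4)$.

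For the complexity-bound conditions, $(1)\Rightarrow(6)$ is a piecewise check on the trapezoidal shape, and $(7)\Rightarrow(6)$ is immediate by induction from $f_w(0)=1$. It remains to close the loop via $(6)\Rightarrow(5)$: under (6), strict increase on $[0,m_w]$ combined with $f_w(0)=1$ and $f_w(m_w)\le m_w+1$ forces $f_w(i)=i+1$ on $[0,m_w]$ and $r_w(i)=1$ on $[0,m_w-1]$; on the nondecreasing region $[m_w,M_w]$ one splits into cases according to whether $R_w\le K_w$, where Proposition~\ref{prop:del} directly gives constancy, or $K_w<R_w$, where a short combinatorial argument rules out $r_w(n)\ge 2$ at any $n\in[K_w,R_w-1]$. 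This argument uses that two distinct right special length-$n$ factors together with the unique length-$n$ suffix produce at least three length-$n$ factors and four distinct length-$(n+1)$ extensions, which, combined with the saturation imposed by (6) one step further, pins down enough of the binary factor structure of $w$ to reach a contradiction. The implication $(5)\Rightarrow(7)$ is then immediate since $r_w(n)\le 1$ forces $|f_w(n+1)-f_w(n)|\le 1$.

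The main obstacle is precisely the combinatorial exclusion of $r_w(n)\ge 2$ in the case $K_w<R_w$: Proposition~\ref{prop:del} on its own does not guarantee constancy of $f_w$ on $[m_w,M_w]$ when $K_w<R_w$, so the potential existence of two right special factors of some length in $[K_w,R_w-1]$ must be ruled out by a direct analysis of the binary factor structure forced by (6). Everything else then follows routinely from Proposition~\ref{prop:del} and the telescoping identity above.
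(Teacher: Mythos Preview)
Your approach differs substantially from the paper's. The paper's proof is essentially a set of citations: the equivalence of (1)--(5) is attributed to \cite{Del99} and \cite{Dal02}, (5)$\Leftrightarrow$(6) is declared to follow from ``elementary considerations'' (only the direction (5)$\Rightarrow$(6) is justified, via $f_w(n+1)\le f_w(n)+r_w(n)$), and (1)$\Leftrightarrow$(7) is said to come ``directly from the definitions and from Proposition~\ref{prop:del}''. You instead attempt a self-contained argument built on the first-difference identity $f_w(n+1)-f_w(n)=r_w(n)-k_w(n)$ and on Proposition~\ref{prop:del}. Your telescoping proof of (3)$\Leftrightarrow$(5), the squeeze argument for (1)$\Leftrightarrow$(3), the reversal trick for (2) and (4), and the easy implications (1)$\Rightarrow$(6), (7)$\Rightarrow$(6), (5)$\Rightarrow$(7) are all correct and clean.

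The genuine gap is exactly where you flag it: the implication (6)$\Rightarrow$(5) in the case $K_w<R_w$. Your sketch (``two right special factors plus the unique suffix give three factors of length $n$ and four of length $n+1$, and saturation one step further pins down enough structure to reach a contradiction'') does not close. Concretely, if $n_0\ge K_w$ is minimal with $r_w(n_0)\ge 2$, one finds $f_w(n_0)=K_w+1$ and $f_w(n_0+1)=K_w+r_w(n_0)$; condition (6) only yields $K_w+r_w(n_0)\le n_0+2$, which is compatible with $r_w(n_0)=2$ whenever $n_0\ge K_w$. No contradiction arises from lengths $n_0$ and $n_0+1$ alone, and pushing to $n_0+2$ does not obviously help either. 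What actually resolves this case is a further structural fact you have not invoked: at the minimal such $n_0$ the two right special factors must be of the form $ac,bc$ with $c$ bispecial, and then $ca,cb$ are \emph{both} left special, forcing $l_w(n_0)\ge 2$; by the symmetric argument the minimal length with two left specials coincides with $n_0$, which in turn forces $H_w=K_w$ and lets one combine the $(R_w,K_w)$ and $(L_w,H_w)$ versions of Proposition~\ref{prop:del} to exclude the profile you are worried about. Without this (or an appeal to the literature as the paper does), your chain leaves (6) and (7) only as consequences of (1)--(5), not as equivalents.
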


\begin{proof}
The equivalence of $(1)$, $(2)$, $(3)$, $(4)$ and $(5)$ is in \cite{Del99} and \cite{Dal02}. 

The equivalence of $(5)$ and $(6)$ follows from elementary considerations on the factorial complexity of binary words. Indeed, it is easy to see the number of distinct factors of length $n+1$ of a binary word $w$ is at most equal to the number of distinct factors of length $n$ plus the number of right special factors of length $n$. 

The equivalence of $(7)$ and $(1)$ comes directly from the definitions and from Proposition \ref{prop:del}. 

The proof is therefore complete.
\end{proof}

Recall that a finite word is \emph{Sturmian} if  and only if it is balanced, i.e., verifies (\ref{eq:bal}). The following proposition is from de Luca (\cite{Del99}, Proposition 7.1).

\begin{proposition}\label{prop:sturmtrap}
Let $w$ be a binary word. If $w$ is Sturmian, then $w$ is trapezoidal. 
\end{proposition}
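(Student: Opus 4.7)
The plan is to leverage the characterization already cited in the introduction (Lothaire, Proposition 2.1.17): a finite binary word is Sturmian, i.e.\ balanced in the sense of equation~(\ref{eq:bal}), if and only if it occurs as a factor of some infinite Sturmian word. Combined with the standard fact that every infinite Sturmian word has factor complexity exactly $n+1$, this immediately bounds the complexity of $w$ from above.

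First I would fix a binary Sturmian word $w$ and, by the Lothaire characterization, choose an infinite Sturmian word $s$ such that $w$ is a factor of $s$. Then every factor of $w$ of any given length is also a factor of $s$ of that length, so $\Fact(w)\cap\Sigma^{n}\subseteq \Fact(s)\cap\Sigma^{n}$, and consequently $f_{w}(n)\le f_{s}(n)=n+1$ for every $n\ge 0$.

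To conclude, I would invoke Proposition \ref{prop:trap}: its condition~(6) asserts precisely that $f_{w}(n)\le n+1$ for every $n\ge 0$, which we have just established, and by the equivalence $(6)\Leftrightarrow(1)$ this is the definition of $w$ being trapezoidal.

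No real obstacle arises along this route; the only point requiring care is the consistency of the two definitions of ``finite Sturmian word'', namely balancedness and being a factor of an infinite Sturmian word. Since the paper explicitly adopts balancedness as the working definition and cites Lothaire's equivalence, the embedding of $w$ into an infinite Sturmian word is directly available, and the rest of the argument is purely an inequality between complexity functions followed by an appeal to Proposition~\ref{prop:trap}.
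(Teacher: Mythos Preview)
Your argument is correct: embedding $w$ into an infinite Sturmian word and bounding $f_{w}(n)$ by $n+1$, then invoking condition~(6) of Proposition~\ref{prop:trap}, is a clean and valid proof. Note, however, that the paper does not actually supply its own proof of Proposition~\ref{prop:sturmtrap}; it simply attributes the result to de~Luca (\cite{Del99}, Proposition~7.1) and moves on. So there is no ``paper's proof'' to compare against here---your proposal fills in what the paper leaves as a citation, and it does so along essentially the expected lines.
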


The inclusion in Proposition \ref{prop:sturmtrap} is strict, since there exist trapezoidal words that are not Sturmian, e.g. the word $w=aaababa$ considered in Example \ref{ex:1}. 

Recall that a binary word $w$ is \emph{rich} (or \emph{full}) \cite{GlJuWiZa09} if it contains $|w|+1$ distinct palindromic factors, that is the maximum number of distinct palindromic factors a word can contain. 
 
The following proposition is from de Luca, Glen and Zamboni  (\cite{DelGlZa08}, Proposition 2).

\begin{proposition}\label{prop:traprich}
Let $w$ be a binary word. If $w$ is  trapezoidal, then $w$ is rich.
\end{proposition}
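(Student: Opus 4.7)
My plan is to use the characterization of rich words from \cite{GlJuWiZa09}, which states that a finite word $w$ is rich if and only if every prefix $v$ of $w$ admits a longest palindromic suffix that is \emph{unioccurrent} in $v$ (equivalently, every complete return in $w$ to a palindromic factor is itself a palindrome). Combined with Proposition \ref{prop:trap}, this reduces the question to a statement about special factors and palindromic suffixes of prefixes, which is the natural currency for trapezoidal words.

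I would argue by contradiction. Suppose $w$ is trapezoidal but not rich, and let $v$ be a shortest prefix of $w$ whose longest palindromic suffix $p$ is not unioccurrent in $v$. Consider the two rightmost occurrences of $p$ in $v$; together they delimit a complete return $r$ to $p$, which is a suffix of $v$ and satisfies $|r|>|p|$. This $r$ cannot be a palindrome: if it were, then because $p$ is a proper prefix and a proper suffix of the palindrome $r$, we could extend $p$ to a strictly longer palindromic suffix of $v$, contradicting maximality of $p$.

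The core combinatorial step is then to convert the non-palindromic complete return $r$ into a conflict with the trapezoidal property. I would examine the letters immediately preceding each of the two occurrences of $p$ inside $v$. Since the alphabet is binary and $p$ is a palindrome, the mismatch in $r$ forces the two occurrences of $p$ to admit incompatible left extensions inside $v$, producing a left-special factor; symmetrically, by reading $\tilde r$, one obtains a right-special factor. Iterating this analysis with the maximality of $p$ and the binarity of $\Sigma$, one should be able to exhibit two distinct left-special factors (or two distinct right-special factors) of the same length in $w$, directly contradicting item (4) (resp.\ (5)) of Proposition \ref{prop:trap}.

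The main obstacle is precisely this last step: extracting two distinct special factors of a common length from a single non-palindromic complete return. The argument is sensitive to how the two occurrences of $p$ interact inside $r$ (they may overlap, which is where the palindromic structure becomes essential). If a uniform treatment proves too delicate, the fallback plan is to split into cases according to Proposition \ref{prop:sturmtrap}: finite Sturmian words are known to be rich, so it would suffice to treat non-Sturmian trapezoidal words, and for these one can invoke D'Alessandro's explicit structural description from \cite{Dal02} and verify richness by direct inspection of the resulting normal form.
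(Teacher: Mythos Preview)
The paper does not prove this proposition; it is quoted from \cite{DelGlZa08} (Proposition~2) without argument. So there is no in-paper proof to compare against, and the question is simply whether your sketch stands on its own.

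The setup (shortest bad prefix $v$, its longest palindromic suffix $p$, the non-palindromic complete return $r$ ending $v$) is correct and standard. But the step you yourself flag as the obstacle is a genuine gap, and in fact the difficulty starts earlier than you indicate. Your claim that ``the mismatch in $r$ forces the two occurrences of $p$ to admit incompatible left extensions'' is not justified: the suffix occurrence of $p$ in $v$ has a left extension inside $r$, but the prefix occurrence of $p$ in $r$ is left-extended only by whatever happens to precede $r$ in $v$, and nothing in your argument forces those two letters to differ. Your appeal to $\tilde r$ is also problematic, since $\tilde r$ need not be a factor of $w$ (factor sets of trapezoidal words are not closed under reversal). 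And even if you did obtain that $p$ is both left and right special, that is perfectly compatible with trapezoidality; what you need is two \emph{distinct} left (or right) special factors of some common length, and nothing in your outline produces such a pair.

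Your fallback is closer to a proof but is still a programme rather than an argument. That finite Sturmian words are rich is indeed known, so only the non-Sturmian trapezoidal words remain. However, ``direct inspection'' of the normal form $w=pq$ from Theorem~\ref{theor:dal} is not free: by Lemma~\ref{lem:pk} both $p$ and $q$ are Sturmian and hence rich, yet a concatenation of two rich words need not be rich. You would still have to exploit the precise shape of $p\in\Suff(\tilde z_f^{*})$ and $q\in\Pref(z_g^{*})$, with $f=aua$, $g=bub$ and $u$ central, to rule out a non-palindromic complete return to a palindrome straddling the seam between $p$ and $q$. That is where the actual work lies, and it is absent from your outline.
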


Again, the inclusion in Proposition \ref{prop:traprich} is strict, since there exist rich words that are not trapezoidal, e.g. the word $w=aabbaa$.

D'Alessandro \cite{Dal02} characterized the non-Sturmian trapezoidal words. We report below his characterization.

First, recall that a word $w$ is unbalanced (i.e., $w$ is not balanced) if and only if it contains a pair of \emph{pathological factors} $(f,g)$, that is, $f$ and $g$ are factors of $w$ of the same length but they do not verify (\ref{eq:bal}). Moreover, if $f$ and $g$ are chosen of minimal length, there exists a palindrome $u$ such that $f=aua$ and $g=bub$, for two different letters $a$ and $b$ (see \cite{LothaireAlg}, Proposition 2.1.3). 

We can also state that $f$ and $g$ are Sturmian words, since otherwise they would contain a  pair of pathological factors shorter than $|f|=|g|$ and hence $w$ would contain such a pair of pathological factors, against the minimality of $f$ and $g$. So the word $u$ is a palindrome such that $aua$ and $bub$ are Sturmian words, i.e., $u$ is a \emph{central} word \cite{DelMi94}.

The following lemma is attributed to Aldo de Luca in \cite{Dal02}.

\begin{lemma}\label{lem:separation}
 Let $w$ be a non-Sturmian word and $(f,g)$ the pair of pathological factors of $w$ of minimal length. Then $f$ and $g$ do not overlap in $w$.
\end{lemma}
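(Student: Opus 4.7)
The plan is to argue by contradiction: suppose the occurrences of $f$ and $g$ in $w$ overlap, and recall from the paragraph preceding the lemma that $f = aua$ and $g = bub$ where $a \neq b$ and $u$ is a central word, hence in particular a palindrome. Without loss of generality place the occurrence of $f$ at position $i$ and that of $g$ at position $j \geq i$, and set $k = j - i \in \{0, 1, \ldots, |f|-1\}$; the case in which $g$ precedes $f$ is symmetric.

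First I would dispose of the two boundary values of $k$. If $k = 0$ the occurrences coincide, forcing $f = g$ and hence $a = b$, a contradiction. If $k = |f|-1$ the occurrences share exactly one position, namely the final letter of $f$ (which is $a$) must equal the initial letter of $g$ (which is $b$), again giving $a = b$.

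The core of the argument is the range $1 \leq k \leq |u|$. Writing $u = u_1 \cdots u_n$, so that $f = a u_1 \cdots u_n a$ and $g = b u_1 \cdots u_n b$, the overlap identifies the length-$(n+2-k)$ suffix of $f$ with the length-$(n+2-k)$ prefix of $g$; that is,
\[
u_k u_{k+1} \cdots u_n \, a \;=\; b \, u_1 u_2 \cdots u_{n+1-k}.
\]
Reading off the first and last symbols of this word equation yields $u_k = b$ and $u_{n+1-k} = a$. The key step is then to invoke palindromicity of $u$: since $u$ is central, $u_k = u_{n+1-k}$, whence $a = b$, the sought contradiction.

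I do not anticipate a serious obstacle. The only mildly delicate point is the degenerate case $u = \varepsilon$ (so $f = aa$ and $g = bb$), where only the shift $k = 1 = |f|-1$ arises and is already covered by the boundary analysis. The crucial input, namely that $u$ is a central palindrome, was established in the discussion immediately preceding the lemma and is used here as a black box.
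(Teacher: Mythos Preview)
Your argument is correct: the overlap equation forces $u_k=b$ and $u_{n+1-k}=a$, and palindromicity of the central word $u$ then yields $a=b$; the boundary shifts $k=0$ and $k=|f|-1$ are handled cleanly, and the empty-$u$ case is indeed absorbed into those.

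As for comparison, the paper does not supply a proof of this lemma at all: it merely states the result and attributes it to de Luca (as reported in D'Alessandro's paper). So there is no in-paper argument to set yours against; you have filled in what the paper leaves as a citation.
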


The following is the characterization of trapezoidal non-Sturmian words given by D'Alessandro \cite{Dal02}.

\begin{theorem}\label{theor:dal}
 Let $w$ be a binary non-Sturmian word. Then w is trapezoidal if and only if
$$w = pq, \mbox{\hspace{4mm} with $p \in \Suff(\{\tilde{z}_{f}^{*}\})$,\hspace{4mm} $q\in \Pref(\{z_{g}^{*}\})$}  $$
where $\tilde{z}_{f}$ is the mirror image of the fractional root $z_{f}$ of $f$, $z_{g}$ is the fractional root of $g$, with $(f,g)$ being the pair of pathological factors of $w$ of minimal length. 

In particular, $K_{w}=|q|$ and the longest right special factor of $w$ is the prefix of $w$ of length $R_{w}-1$.
\end{theorem}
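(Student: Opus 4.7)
The plan is to anchor the argument on the setup built just before the statement: since $w$ is non-Sturmian, it carries a minimal pair $(f,g)=(aua,bub)$ of pathological factors with $u$ a central word, and by Lemma~\ref{lem:separation} the chosen occurrences of $f$ and $g$ do not overlap inside $w$. Up to exchanging the roles of the two letters, I may place an occurrence of $f$ strictly to the left of an occurrence of $g$; the decomposition $w=pq$ will be obtained by cutting $w$ between these two occurrences, and the two halves will inherit the fractional-root periods of $f$ and of $g$ respectively.

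For the direct implication, the decisive observation is that the central word $u$ is bispecial in $w$: inside $f=aua$ and $g=bub$ it is preceded by both $a$ and $b$ and likewise followed by both letters. Since $w$ is trapezoidal, Proposition~\ref{prop:trap}(4)-(5) forces $u$ to be the \emph{unique} left-special and the \emph{unique} right-special factor of its length. I would then propagate this uniqueness outward: every occurrence of $u$ strictly to the left of the chosen $f$ must be followed by $a$ (the continuation $ub$ is used only by the occurrence delivering $bub$), and an induction along the period $|z_f|$ of $f$ forces the left part $p$ of $w$ to be a suffix of some power of $\tilde z_f$. The symmetric argument on the right side yields $q\in\Pref(\{z_g^{*}\})$.

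For the converse I would assume $w=pq$ of the prescribed form and verify the trapezoidal property via conditions (4)-(5) of Proposition~\ref{prop:trap}. The periodic structure of $p$ and $q$ ensures that every occurrence of $u$ strictly inside $p$, respectively inside $q$, has a forced continuation, so that only the designated occurrences of $u$ coming from $f$ and $g$ actually witness bispeciality; no shorter pathological pair can appear by minimality of $(f,g)$. The two refined assertions then drop out of the construction: $K_w=|q|$ because any proper suffix of $q$ is already repeated earlier by the $z_g$-periodicity of $q$, while the final letters of $q$ pin down the unique occurrence of $bub$ at the right end of $w$; and the longest right-special factor, constrained by uniqueness at length $|u|$ and by the periodic extension on the $p$-side, must be the prefix of $w$ of length $R_w-1$.

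The step I expect to be the main obstacle is the propagation inside $p$: translating the single local fact ``$u$ is the unique right-special factor of length $|u|$'' into the global periodic structure of $p$. The danger is that extending $u$ by one letter at a time might create a new right-special factor of length strictly greater than $|u|$, which would violate $|f_{w}(n+1)-f_{w}(n)|\le 1$ (Proposition~\ref{prop:trap}(7)); ruling this out requires combining the central-word structure of $u$ with a careful accounting of the complexity increments permitted by the trapezoidal condition.
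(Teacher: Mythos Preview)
The paper does not contain a proof of Theorem~\ref{theor:dal}: the result is quoted verbatim from D'Alessandro~\cite{Dal02} and used as a black box throughout (for instance in the argument following Lemma~\ref{lem:pk} and in the proof of Proposition~\ref{prop:cloStur}). There is therefore nothing in the present paper to compare your attempt against.

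That said, your outline is broadly in the spirit of D'Alessandro's original argument, but as written it has genuine soft spots. First, the phrase ``cutting $w$ between these two occurrences'' is imprecise: Lemma~\ref{lem:separation} only guarantees non-overlap, not adjacency, so you have not yet specified where the cut goes; in the actual result the cut is dictated by the value of $K_w$ (equivalently $R_w$), not by an arbitrary choice of non-overlapping occurrences. Second, your propagation step is really the whole theorem: you say you would ``induct along the period $|z_f|$'', but the induction hypothesis and the reason why no \emph{longer} right-special factor can appear during the propagation are not spelled out, and you yourself flag this as the obstacle. Third, the ``up to exchanging the roles of the two letters'' move needs care: the statement distinguishes $f$ from $g$ (one feeds the reversed root on the left, the other the root on the right), so the ordering is part of the conclusion, not a symmetry you may impose. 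Finally, in the converse direction your sketch does not address why the resulting $w$ is non-Sturmian with \emph{exactly} $(f,g)$ as its minimal pathological pair, nor why no left- or right-special factor of a given length other than the appropriate prefix of $u$ can arise from the overlap region where $p$ meets $q$.
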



\begin{example}
 Let $w=aaababa$ be the non-Sturmian trapezoidal word considered in Example \ref{ex:1}. We have  $f=aaa$ and $g=bab$, so that $\tilde{z}_{f}=a$ and $z_{g}=ba$. The word $w$ factorizes as $w=pq$, with $p=aaa$ and $q=baba$. 
\end{example}

Hence, trapezoidal words are either Sturmian or of the form described in Theorem \ref{theor:dal}. The following result of de Luca, Glen and Zamboni states that trapezoidal palindromes are all Sturmian. 

\begin{theorem}[\cite{DelGlZa08}]\label{theor:trappal}
 The following conditions are equivalent:
 
\begin{enumerate}
 \item $w$ is a trapezoidal palindrome;
 \item $w$ is a Sturmian palindrome.
\end{enumerate}
\end{theorem}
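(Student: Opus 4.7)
The direction $(2) \Rightarrow (1)$ is immediate from Proposition~\ref{prop:sturmtrap}. For the converse I would argue by contradiction. Suppose $w$ is a trapezoidal palindrome that is not Sturmian. Theorem~\ref{theor:dal} then produces a factorization $w = pq$ with $p\in\Suff(\{\tilde{z}_{f}^{*}\})$, $q\in\Pref(\{z_{g}^{*}\})$ and $|q|=K_w$, where $(f,g)=(aua,bub)$ is the pair of pathological factors of $w$ of minimal length and $u$ is a central word.

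The palindromic hypothesis $w=\tilde{w}$ gives $pq=\tilde{q}\tilde{p}$, so the prefix of $w$ of length $|q|$ is precisely $\tilde{q}$. Splitting on $|p|$ versus $|q|$, I may assume $|p|\ge|q|$ (the opposite case is handled by the symmetric argument with $f$ and $g$ exchanging roles); then $\tilde{q}$ is a prefix of $p$. Since $q\in\Pref(\{z_{g}^{*}\})$ and $|q|\ge|g|$ (a fact extractable from the proof of Theorem~\ref{theor:dal}), the palindrome $g$ is itself a prefix of $q$, hence a suffix of $\tilde{q}$, and therefore a factor of $p$, which in turn is a factor of some power $\tilde{z}_{f}^{n}$.

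The Fine--Wilf theorem now yields the contradiction. The factor $g$ has both the ambient period $|z_f|$ (since $g$ is a factor of $\tilde{z}_{f}^{n}$ and $|g|\ge|z_f|$) and its own period $|z_g|$. By the classical two-periods theorem for central words, $|z_f|+|z_g|=|u|+2=|g|$ and $\gcd(|z_f|,|z_g|)=1$, so the hypothesis of Fine--Wilf is satisfied and $g$ must have period $1$; writing $g=b^{|g|}$, one obtains $u=b^{|u|}$, $f=ab^{|u|}a$ and $\tilde{z}_f=b^{|u|}a$. But no power of $b^{|u|}a$ contains a $b$-run of length $|u|+2=|g|$, contradicting that $g$ is a factor of $\tilde{z}_{f}^{n}$.

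The main technical obstacle is the handling of the boundary conditions: confirming rigorously that $g$ occurs as a prefix of $q$ (equivalently that $|q|\ge|g|$) in the D'Alessandro factorization, and invoking cleanly the two-periods structure of central words that underlies the relation $|z_f|+|z_g|=|g|$ and the coprimality of $|z_f|$ and $|z_g|$; both are classical facts but only implicit in the excerpt, and a full proof would need to verify them explicitly.
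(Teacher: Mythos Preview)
Your argument is correct (the two gaps you flag---that $|q|\ge|g|$ in D'Alessandro's factorization, and the relation $|z_f|+|z_g|=|g|$ with $\gcd(|z_f|,|z_g|)=1$ for a central $u$---are indeed standard and can be filled), but it follows a genuinely different route from the paper's.

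The paper first proves the auxiliary Lemma~\ref{lem:pk}: both $p$ and $q$ are Sturmian, because $\tilde z_f$ and $z_g$ are conjugate to standard words. It then writes $w=v_1fgv_2=\tilde v_2gf\tilde v_1$ (using that $f,g$ are palindromes) and does a short case analysis on $|v_1|$ versus $|v_2|$: equality forces $f=g$; inequality forces either an overlap of $f$ and $g$ (excluded by Lemma~\ref{lem:separation}) or both $f$ and $g$ inside the same piece $p$ or $q$, contradicting Lemma~\ref{lem:pk}. Your proof instead bypasses Lemma~\ref{lem:pk} entirely: from $|p|\ge|q|$ you embed $g$ into a power of $\tilde z_f$, and then let Fine--Wilf and the two coprime periods of $g$ do the work, reducing to the unary case which you dispatch by hand. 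The paper's approach is more structural (it isolates the reusable fact that $p$ and $q$ are Sturmian), while yours is more arithmetical and self-contained once the period identities for central words are granted; in particular you never need the separation lemma nor the characterization of Sturmian words via conjugates of standard words.
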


Let us give a proof of this latter result based on Theorem \ref{theor:dal}. We first show that the words $p$ and $q$ in the factorization of Theorem \ref{theor:dal} are Sturmian words.

\begin{lemma}\label{lem:pk}
Let $w$ be a trapezoidal non-Sturmian word and let $w=pq$, with $p \in \Suff(\{\tilde{z}_{f}^{*}\})$ and $q\in \Pref(\{z_{g}^{*}\})$, be the factorization of Theorem \ref{theor:dal}. Then $p$ and $q$ are Sturmian words. 
\end{lemma}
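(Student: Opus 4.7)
The plan is to prove that $p$ is Sturmian; the argument for $q$ will be entirely symmetric. Since balance is closed under taking factors, it suffices to exhibit a balanced (finite or infinite) word of which $p$ is a factor.

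First I would exploit the shape of the minimal pathological pair already recalled in the text: $f = aua$ with $u$ a central word, so $f$ is itself a Sturmian palindrome (palindromicity from $\tilde{u} = u$). The crux is then to show that the purely periodic infinite word $\tilde{z}_{f}^{\omega}$ is balanced: granting this, the inclusion $p \in \Suff(\{\tilde{z}_{f}^{*}\})$ exhibits $p$ as a factor of $\tilde{z}_{f}^{\omega}$, hence balanced, hence Sturmian.

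To establish the balance of $\tilde{z}_{f}^{\omega}$, I would equivalently prove that $z_{f}^{\omega}$ is balanced, since balance is preserved by reversal. Here I would invoke the classical theory of central words: for a central word $u$, the fractional root $z_{f}$ of the Sturmian palindrome $f = aua$ is, up to conjugacy, a Christoffel word, and so $z_{f}^{\omega}$ is a purely periodic Sturmian sequence. Concretely, $u$ has two coprime periods $p_{1},p_{2}$ with $|u| = p_{1}+p_{2}-2$, and the interplay of these periods with the flanking letter $a$ determines $z_{f}$ as a conjugate of one of the two standard Christoffel words built on the slope $(|z_{f}|_{b}, |z_{f}|_{a})$. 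Applying the analogous argument to $g = bub$ shows that $q \in \Pref(\{z_{g}^{*}\})$ is a factor of the balanced infinite word $z_{g}^{\omega}$, which finishes the proof.

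The main obstacle I anticipate is exactly the transfer of the balance property from the finite Sturmian word $f$ to its infinite periodic extension $z_{f}^{\omega}$. This transfer fails for a general balanced word: for instance, $f = aa$ is balanced but $f^{\omega}$ is trivially balanced only because of its shape, whereas other balanced $f$ have unbalanced periodic extensions. What saves the argument here is the very specific $aua$-structure with $u$ central, which forces $z_{f}$ into the Christoffel class; once this structural identification is in hand, the remainder of the proof is purely formal.
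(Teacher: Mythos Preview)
Your proposal is correct and follows essentially the same route as the paper: both arguments reduce to showing that $z_{f}$ and $z_{g}$ are conjugates of standard Sturmian (equivalently, Christoffel) words, so that their infinite periodic extensions are balanced and hence $p$ and $q$ are Sturmian as factors. The paper makes this identification explicit via the decomposition $u=w_{1}xyw_{2}=w_{2}yxw_{1}$ of the central word $u$ (treating the case $u=x^{n}$ separately) to obtain $z_{f}=xw_{2}y$ and $z_{g}=yw_{1}x$, and then invokes a theorem of de~Luca and De~Luca stating that a word whose fractional root is conjugate to a standard Sturmian word is Sturmian; the substance is the same as what you outline.
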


\begin{proof}
Recall that any central word $u$ that is not a power of a single letter can be uniquely written as $u=w_{1}xyw_{2}=w_{2}yxw_{1}$, for two central words $w_{1},w_{2}$ and different letters $x,y$ (see \cite{CarDel05}, Proposition 1).

Let $u$ be the central word such that $f=aua$ and $g=bub$. If $u$ is not a power of a single letter, the fractional roots of $f$ and $g$ are $xw_{2}y$ and $yw_{1}x$ (\cite{Dal02}, Lemma 2). This implies that $\tilde{z}_{f}$ and $z_{g}$ both are conjugate to standard Sturmian words\footnote{Standard Sturmian words are words of length one or of the form $uxy$, with $u$ central word and $x,y$ different letters \cite{DelMi94}.}. If  $u=x^{n}$, $x\in \Sigma$, $n>0$, then the fractional roots of $f$ and $g$ are $x$ and $yx^{n}$, so even in this case $\tilde{z}_{f}$ and $z_{g}$ both are conjugate to standard Sturmian words.

By Theorem 1 in \cite{DelDel06}, any word whose fractional root is conjugate to a standard Sturmian word is a Sturmian word. This implies that every word belonging to $\tilde{z}_{f}^{*}$ or to $z_{g}^{*}$ is Sturmian. Since a factor of a Sturmian word is a Sturmian word, the claim follows.
\end{proof}

Now, let $w$ be a trapezoidal palindrome. If $w$ is non-Sturmian we can write, by Theorem \ref{theor:dal}, $w=pq=v_{1}fgv_{2}=\tilde{v}_{2}gf\tilde{v}_{1}$, with $v_{1},v_{2}\in \{a,b\}^{*}$ such that $p=v_{1}f$ and $q=gv_{2}$. If $|v_{1}|=|v_{2}|$, then $f=g$, a contradiction. If $|v_{1}|\neq |v_{2}|$ then either $f$ and $g$ overlap (a contradiction with Lemma \ref{lem:separation}) or $p$ (or $q$) contains $f$ and $g$ as factors, a contradiction since, by Lemma \ref{lem:pk}, $p$ and $q$ are Sturmian. So $w$ cannot be a non-Sturmian word. 

Hence, we proved that trapezoidal palindromes are Sturmian. Since by Proposition \ref{prop:sturmtrap}, Sturmian words are trapezoidal, the claim of Theorem \ref{theor:trappal} follows.

\section{Open and Closed Trapezoidal Words}

In this section we derive some properties of open and closed trapezoidal words.

\begin{proposition}\label{prop:rev}
Let $w$ be a trapezoidal word. Then $w$ is open (resp.\ closed) if and only if $\tilde{w}$ is open (resp.\ closed).
\end{proposition}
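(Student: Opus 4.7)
The plan is to prove the statement by showing that for \emph{any} word $w$ (trapezoidality turns out not to be needed), closedness is preserved under reversal; the ``open'' case then follows at once, since open is the logical negation of closed. By the symmetry $\tilde{\tilde{w}} = w$, only one direction needs to be proved.

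Suppose $w$ is closed, and let $u$ and $v$ denote its longest repeated prefix and its longest repeated suffix, respectively. The central claim I would aim for is $v = u$. Granted this, $\tilde{v}$ is the longest repeated prefix of $\tilde{w}$ (any longer repeated prefix of $\tilde{w}$ would reverse to a longer repeated suffix of $w$, contradicting the choice of $v$), and the two occurrences of $u$ in $w$ as prefix and suffix translate under reversal into two occurrences of $\tilde{v}$ in $\tilde{w}$ as suffix and prefix, respectively; so $\tilde{w}$ is closed by Definition~\ref{def:closed}.

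To establish $v = u$: the inequality $|v| \geq |u|$ is immediate, because the closed condition makes $u$ a suffix of $w$ occurring at least twice, hence a repeated suffix of $w$. For $|v| \leq |u|$, I would argue by contradiction. Suppose $|v| > |u|$. Then $v$, the suffix of $w$ of length $|v|$, has $u$ (the suffix of $w$ of length $|u|$) as its own suffix. Since $v$ is repeated, it has a non-suffix occurrence starting at some position $i$ with $1 \leq i < |w|-|v|+1$; the final $|u|$ letters of this occurrence of $v$ produce an occurrence of $u$ starting at position $i + |v| - |u|$, and an elementary bound check gives $1 < i+|v|-|u| < |w|-|u|+1$. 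This is a third occurrence of $u$, distinct from the prefix occurrence at $1$ and the suffix occurrence at $|w|-|u|+1$, contradicting the fact that $u$ has exactly two occurrences in $w$.

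The main obstacle is precisely the positional bookkeeping in this contradiction argument; once the extra occurrence of $u$ is trapped strictly between the prefix and suffix positions, the rest of the proof---translating closedness across the reversal map and verifying that the roles of prefix and suffix swap---is a routine consequence of Definition~\ref{def:closed} and the bijection between occurrences of a factor in $w$ and occurrences of its reversal in $\tilde{w}$.
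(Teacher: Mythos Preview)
Your argument is correct, and the core idea---that for a closed word the longest repeated prefix coincides with the longest repeated suffix, so the roles swap cleanly under reversal---is exactly the one the paper uses. The difference is in how much is proved and how much is assumed. The paper simply asserts that the longest repeated prefix of a closed word is also its longest repeated suffix with exactly two occurrences, and then, to handle the open case, invokes an external result (D'Alessandro, Corollary~7) that the class of trapezoidal words is closed under reversal, deducing the open case by complementation inside that class. You instead prove the prefix--suffix coincidence directly via the positional argument, and you do so for \emph{arbitrary} words; this makes the open case immediate (open is just ``not closed'') with no appeal to the reversal-stability of trapezoidality. So your route is slightly more self-contained and more general, at the cost of the extra bookkeeping paragraph; the paper's route is terser but leans on a cited fact you manage to bypass.
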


\begin{proof}
If $w$ is a closed trapezoidal word, then its longest repeated prefix is also its longest repeated suffix and has exactly two occurrences in $w$. This implies that $\tilde{w}$ has the same property. So $\tilde{w}$ is a closed trapezoidal word. Hence the set of closed trapezoidal words is closed by reversal.  

Since the whole set of trapezoidal words is closed by reversal (\cite{Dal02}, Corollary 7) and since open trapezoidal words form the complement of closed trapezoidal words in the set of trapezoidal words, the set of open trapezoidal words is closed by reversal too.
\end{proof}

The following proposition gives a characterization of open trapezoidal words.

\begin{proposition}\label{prop:openspe}
 Let $w$ be a trapezoidal word. Then the following conditions are equivalent:
\begin{enumerate}
\item $w$ is open;
 \item the longest repeated prefix of $w$ is also the longest right special factor of $w$;
 \item the longest repeated suffix of $w$ is also the longest left special factor of $w$.
\end{enumerate}
\end{proposition}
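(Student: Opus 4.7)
The plan is to establish $(1)\Leftrightarrow(2)$ directly, then to obtain $(1)\Leftrightarrow(3)$ by applying $(1)\Leftrightarrow(2)$ to the reversed word $\tilde{w}$ and invoking Proposition~\ref{prop:rev}, together with the observation that reversal exchanges prefixes with suffixes and right-special with left-special factors.

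The direction $(2)\Rightarrow(1)$ is immediate: if the longest repeated prefix $p$ of $w$ coincides with the longest right special factor, then $p$ is right special, so at least two of its occurrences are followed by distinct letters. But in a closed word, by definition the longest repeated prefix has exactly two occurrences, the second being a suffix, so only one occurrence is followed by any letter; this rules out right-speciality, and hence $w$ must be open.

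For $(1)\Rightarrow(2)$ I proceed in three steps. \textbf{Step~1:} show that $p$ is right special. Openness of $w$ ensures that at least two occurrences of $p$ are \emph{non-suffix}, i.e.\ followed by a letter of $w$. If all these following letters coincided, then extending $p$ by that common letter would yield a longer repeated prefix, contradicting the maximality of $p$. Hence two distinct letters follow $p$, so $p$ is right special, and by Proposition~\ref{prop:trap}(5) it is the unique right special factor of length $|p|$. \textbf{Step~2:} identify the structural constraint on any longer right special factor. Suppose for contradiction that the longest right special factor $u$ satisfies $|u|>|p|$. The suffix of $u$ of length $|p|$ is itself right special, so by uniqueness it must be $p$; thus $p$ is a proper suffix of $u$. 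Moreover $u$ cannot be a prefix of $w$, else it would be a repeated prefix strictly longer than $p$. Therefore every occurrence of $u$ starts at a position $\geq 2$, and at every non-suffix occurrence of $u$ the induced occurrence of $p$ (as the suffix of that copy of $u$) sits at a position $\geq 2 + (|u|-|p|) \geq 3$ that is also non-suffix, and it is followed by the same letter that follows $u$ at that occurrence. \textbf{Step~3:} derive a contradiction using the shortest unique prefix. Let $c=w_{H_w}$, so that $pc$ is the prefix of $w$ of length $H_w$; by the definition of $H_w$, $pc$ occurs only at position~$1$. Consequently every non-prefix occurrence of $p$ that is followed by a letter must be followed by the other letter $\bar c$. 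Combining with Step~2, $u$ is followed by $\bar c$ at each of its non-suffix occurrences, contradicting the right-speciality of $u$.

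The main obstacle is Step~2's positional bookkeeping: one must verify that the occurrences of $p$ inherited from non-suffix occurrences of $u$ are simultaneously non-prefix and non-suffix, which is precisely what enables the uniqueness of $pc$ to force the continuation of $p$ (and hence of $u$) to be $\bar c$ in Step~3. Once these position constraints are correctly pinned down the contradiction is immediate, and $(1)\Leftrightarrow(3)$ follows from the reversal argument mentioned at the start.
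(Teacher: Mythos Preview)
Your argument is correct and the core of it---the proof that $(1)\Rightarrow(2)$ via uniqueness of right special factors per length and the unrepeatedness of the length-$H_w$ prefix---is exactly the paper's argument. The only difference is organizational: the paper closes the equivalence via the cycle $(1)\Rightarrow(2)\Rightarrow(3)\Rightarrow(1)$, using the trapezoidal identity $L_w+H_w=R_w+K_w$ to pass from $(2)$ to $(3)$ and reversal for $(3)\Rightarrow(1)$, whereas you prove $(1)\Leftrightarrow(2)$ in both directions and then obtain $(1)\Leftrightarrow(3)$ entirely by mirror symmetry, which is arguably cleaner since it avoids the length bookkeeping.
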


\begin{proof}
$(1)\Rightarrow (2)$. Let $h$ be the longest repeated prefix of $w$ and $x$ the letter such that $hx$ is a prefix of $w$. Since $w$ is not closed, $h$ has a second non-suffix occurrence in $w$ followed by letter $y$. Since $h$ is the longest repeated prefix of $w$, we have $y\neq x$. Therefore, $h$ is right special in $w$.
 
 Suppose that $w$ has a right special factor $r$ longer than $h$. Since $w$ is trapezoidal, $w$ has at most one right special factor for each length (Proposition \ref{prop:trap}). Since the suffixes of a right special factor are right special factors, we have that $h$ must be a proper suffix of $r$. Since $r$ is right special in $w$, it has at least two occurrences in $w$ followed by different letters. This implies a non-prefix occurrence of $hx$ in $w$, against the definition of $h$.
 
$(2)\Rightarrow (3)$. Let $k$ be the longest repeated suffix of $w$. We first prove that $k$ is left special in $w$. Otherwise, $k$ appears in $w$ exactly twice, once as a prefix and once as a suffix of $w$. This implies that $k=h$, the longest repeated prefix of $w$ -- a contradiction, since by hypothesis $h$ is right special in $w$ and therefore it has at least a non-suffix occurrence. 

It remains to prove that $k$ is the longest left special factor of $w$. Since $w$ is trapezoidal, we have, by Proposition \ref{prop:trap}, $|w|=L_{w}+H_{w}=R_{w}+K_{w}$. Since by hypothesis the longest repeated prefix of $w$ is also the longest right special factor of $w$, we have $H_{w}=R_{w}$ and therefore $L_{w}=K_{w}$. Thus, the longest left special factor of $w$ has length equal to $L_{w}-1=K_{w}-1=|k|$.

$(3)\Rightarrow (1)$. Let $k$ be the longest repeated suffix of $w$ and let $ak$ and $bk$ be factors of $w$, for different letters $a$ and $b$.  Then we have that $\tilde{k}$ is the longest repeated prefix of $\tilde{w}$, and $\tilde{k}a$ and $\tilde{k}b$ are factors of $\tilde{w}$. This proves that the word $\tilde{w}$ is open. The claim then follows from Proposition \ref{prop:rev}.
\end{proof}

\begin{lemma}\label{cor:HRKL}
 Let $w$ be a trapezoidal word. If $w$ is open, then $H_{w}=R_{w}$ and $K_{w}=L_{w}$. If $w$ is closed, then $H_{w}=K_{w}$ and $L_{w}=R_{w}$.
\end{lemma}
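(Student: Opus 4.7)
The plan is to identify each of $H_w, K_w, L_w, R_w$ with the length (plus one) of a specific factor, and then read off the equalities from the characterizations already established.

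Recall that, by the definitions, the longest repeated prefix of $w$ has length $H_w-1$, the longest repeated suffix has length $K_w-1$, the longest left special factor has length $L_w-1$, and the longest right special factor has length $R_w-1$. I will also make free use of the identity
\[
|w|=L_w+H_w=R_w+K_w
\]
provided by Proposition \ref{prop:trap}.

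For the open case, I would apply Proposition \ref{prop:openspe}. Condition (2) there says the longest repeated prefix coincides with the longest right special factor, which immediately translates to $H_w-1=R_w-1$, i.e.\ $H_w=R_w$. Combining this with the trapezoidal identity yields $L_w=K_w$. (One could alternatively derive $K_w=L_w$ from condition (3) of Proposition \ref{prop:openspe} in the same way.)

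For the closed case, the main point is to prove $H_w=K_w$; the equality $L_w=R_w$ then follows from the trapezoidal identity. To this end I would let $h$ denote the longest repeated prefix of $w$ and $k$ the longest repeated suffix. By Definition \ref{def:closed}, $h$ has exactly two occurrences in $w$, one as a prefix and one as a suffix; in particular $h$ is a suffix that occurs (at least) twice, so $h$ is a repeated suffix of $w$, hence $|h|\le|k|$. Symmetrically, applying Proposition \ref{prop:rev} to $\tilde w$ (which is also closed) and unfolding the definition shows that $k$ has exactly two occurrences in $w$, one as a suffix and one as a prefix; so $k$ is a repeated prefix, giving $|k|\le|h|$. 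Hence $|h|=|k|$, i.e.\ $H_w=K_w$, and the equality $L_w=R_w$ follows.

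The only mildly non-routine step is the argument that the longest repeated prefix and the longest repeated suffix of a closed word have the same length; everything else is a direct translation of the definitions together with Propositions \ref{prop:trap}, \ref{prop:rev} and \ref{prop:openspe}.
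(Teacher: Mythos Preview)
Your proof is correct and follows essentially the same approach as the paper's: both use Proposition~\ref{prop:openspe} for the open case and then deduce the second equality from the trapezoidal identity $|w|=L_w+H_w=R_w+K_w$. The only difference is that for the closed case you spell out why $H_w=K_w$ via the symmetry argument through Proposition~\ref{prop:rev}, whereas the paper simply records $H_w=K_w$ as immediate from $w$ being closed (this fact is already implicit in the proof of Proposition~\ref{prop:rev}, where it is asserted that the longest repeated prefix of a closed word is also its longest repeated suffix).
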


\begin{proof}
The claim for open trapezoidal words follows directly from Proposition \ref{prop:openspe}. 
 
Suppose that $w$ is a closed trapezoidal word. Then $H_{w}=K_{w}$ (since $w$ is closed) and therefore $L_{w}=R_{w}$ (since $w$ is trapezoidal).
\end{proof}

Open trapezoidal words can be Sturmian (e.g.\ $w=aaabaa$) or not (e.g.\ $w=aaabab$). Closed trapezoidal words, instead, are always Sturmian, as shown in the following proposition.

\begin{proposition}\label{prop:cloStur}
Let $w$ be a trapezoidal word.  If $w$ is closed, then $w$ is Sturmian.
\end{proposition}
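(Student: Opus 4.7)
The plan is to prove the contrapositive: every non-Sturmian trapezoidal word is open. Since open and closed are complementary classes of trapezoidal words, this is equivalent to the stated proposition.

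Let $w$ be non-Sturmian trapezoidal. Theorem~\ref{theor:dal} gives a factorization $w = pq$, tells us that $K_w = |q|$, and identifies the longest right special factor of $w$ as the prefix of $w$ of length $R_w - 1$. My first aim is to derive the symmetric information $H_w = |p|$. The natural way is to apply Theorem~\ref{theor:dal} to $\tilde{w}$: Proposition~\ref{prop:rev} ensures it is trapezoidal, and balancedness (hence non-Sturmian-ness) is invariant under reversal. The reversed decomposition $\tilde{w} = \tilde{q}\tilde{p}$ fits the shape required by the theorem because $f = aua$ and $g = bub$ are palindromes (the central word $u$ being a palindrome), so the fractional roots and their mirrors swap roles cleanly. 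The theorem then yields $K_{\tilde{w}} = |\tilde{p}| = |p|$, and since $K_{\tilde{w}} = H_w$, we obtain $H_w = |p|$.

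Now $|w| = |p| + |q| = H_w + K_w$, while Proposition~\ref{prop:trap} also gives $|w| = R_w + K_w$. Therefore $R_w = H_w$. Consequently the longest right special factor of $w$, being a prefix of length $R_w - 1 = H_w - 1$ and (since right special) repeated, is a repeated prefix of maximal length; that is, it coincides with the longest repeated prefix of $w$. Condition~(2) of Proposition~\ref{prop:openspe} is thus satisfied, so $w$ is open, contradicting any assumption that $w$ was closed and non-Sturmian.

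The delicate step I expect is the bookkeeping for the reversed factorization of $\tilde{w}$: one must verify that the minimal pathological pair of $\tilde{w}$ is again $(f,g)$ (possibly with the two roles swapped) and that the $p$- and $q$-parts transform correctly under reversal; both facts reduce to the palindromicity of $f$ and $g$. Once this is in order, the equality $R_w = H_w$ is immediate and Proposition~\ref{prop:openspe} closes the argument at once.
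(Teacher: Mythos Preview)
Your argument is correct and takes a genuinely different route from the paper's. The paper argues by contradiction: assuming $w$ is closed and non-Sturmian, it uses $H_{w}=K_{w}$ (from Lemma~\ref{cor:HRKL}) together with Theorem~\ref{theor:dal}, and then splits into the cases $R_{w}\ge K_{w}$ and $R_{w}<K_{w}$; the first case contradicts the fact that the longest repeated prefix of a closed word has only boundary occurrences, and the second forces $p$ to be a factor of $q$, so that $q$ contains both $f$ and $g$, contradicting Lemma~\ref{lem:pk}. Your approach instead exploits the reversal symmetry of Theorem~\ref{theor:dal}: applying it to $\tilde{w}$ (with the pathological pair swapped, which is legitimate because $f$ and $g$ are palindromes) yields $H_{w}=K_{\tilde{w}}=|p|=R_{w}$ directly, and then Proposition~\ref{prop:openspe} finishes at once. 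Your route avoids the case split and makes no use of Lemma~\ref{lem:pk}, at the cost of the reversal bookkeeping you flag; the paper's route is more self-contained but less symmetric. Two minor remarks: the closure of trapezoidal words under reversal is not the content of Proposition~\ref{prop:rev} itself but is cited in its proof (D'Alessandro, Corollary~7); and the uniqueness of the factorization in Theorem~\ref{theor:dal}, which you implicitly use when matching $\tilde{w}=\tilde{q}\,\tilde{p}$ with the theorem's output for $\tilde{w}$, is indeed part of D'Alessandro's result (it is what makes the clause ``$K_{w}=|q|$'' meaningful).
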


\begin{proof}
 Suppose that $w$ is not Sturmian. Then, by Theorem \ref{theor:dal}, $w=pq$, $p \in \Suff(\{\tilde{z}_{f}^{*}\})$, $q\in \Pref(\{z_{g}^{*}\})$, with $(f,g)$ being the pair of pathological factors of $w$ of minimal length, $K_{w}=|q|$ (and hence $R_{w}=|p|$) and the longest right special factor of $w$ is the prefix of $w$ of length $R_{w}-1$. 
 
Since $w$ is closed, we have, by Lemma \ref{cor:HRKL}, $H_{w}=K_{w}$. So the suffix $k$ of $q$ of length $K_{w}-1$ is also the longest repeated prefix of $w$ and appears in $w$ only as a prefix and as a suffix. 
  
If $R_{w}\ge K_{w}$, then $k$ is a prefix of the longest right special factor of $w$. This is a contradiction with the fact that $k$ appears in $w$ only as a prefix and as a suffix.  
  
If $R_{w}<K_{w}$, then $p$ is a prefix of $k$ and hence $p$ is a factor of $q$. This implies that $f$ is a factor of $q$, and therefore $q$ contains both $f$ and $g$ as factors. Hence $p$ would be non-Sturmian, a contradiction with Lemma \ref{lem:pk}.
\end{proof}

The result stated in Proposition \ref{prop:cloStur} can also be found in a paper of Bucci, de Luca and De Luca (\cite{BuDelDel09}, Proposition 3.6). 

As a corollary of Proposition \ref{prop:cloStur}, we have that every trapezoidal word is open or Sturmian. We therefore propose the following

\begin{problem}\label{prob:1}
 Give a characterization of open Sturmian words.
\end{problem}

Trapezoidal words, as well as Sturmian words, can be primitive (e.g.\ $w=aabaaa$) or not (e.g.\ $w=aabaab$). Open trapezoidal words (and in particular, then, non-Sturmian trapezoidal words) are always primitive.

\begin{lemma}\label{lem:oprim}
 Every open trapezoidal word is primitive. 
\end{lemma}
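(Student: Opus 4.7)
The plan is to prove the contrapositive: any non-primitive trapezoidal word is closed. So I assume $w = u^n$ with $n \geq 2$, and by replacing $u$ with its primitive root if necessary, I may assume $u$ is primitive. The claim I aim for is that the longest repeated prefix of $w$ is precisely $u^{n-1}$ and that it has exactly two occurrences in $w$ — one as a prefix and one as a suffix — which is exactly the defining condition of a closed word.

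First, observe that $u^{n-1}$ is a prefix of $w$ and, because $w = u \cdot u^{n-1} = u^{n-1} \cdot u$, it is also a suffix; so $u^{n-1}$ is certainly a repeated prefix. To pin down the exact set of its occurrences, I would invoke the classical fact that a primitive word $u$ occurs in $u^n$ only at the ``block'' positions $1, |u|+1, \ldots, (n-1)|u|+1$ (equivalently, $u$ is unbordered in $uu$). Any occurrence of $u^{n-1}$ at position $j$ in $w$ forces occurrences of $u$ at $j, j+|u|, \ldots, j+(n-2)|u|$, all of which must be block positions; combined with the length constraint $j + (n-1)|u| - 1 \leq |w| = n|u|$, this forces $j \in \{1, |u|+1\}$. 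Hence $u^{n-1}$ has exactly two occurrences in $w$, namely as prefix and as suffix.

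Next I need to rule out a strictly longer repeated prefix. Any such $h$ would have the form $h = u^{n-1} \cdot u[1..k]$ with $1 \leq k < |u|$. A second occurrence of $h$ at some position $j > 1$ would again contain an occurrence of $u^{n-1}$ at $j$, forcing $j = |u|+1$ by the previous step. But then $j + |h| - 1 = |u| + (n-1)|u| + k = n|u| + k > |w|$, which is impossible. Therefore $u^{n-1}$ is indeed the longest repeated prefix, and by the previous paragraph it has its two occurrences as prefix and suffix — so $w$ is closed, completing the contrapositive.

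The step I expect to be the only non-trivial ingredient is the primitive-root localization (the fact that $u$ primitive implies $u$ appears in $u^n$ only at block positions); I would either cite this as standard or give a one-line justification via Fine and Wilf. Everything else is bookkeeping with lengths. Note that the trapezoidal hypothesis is not actually used in this argument — the conclusion holds for every non-primitive binary word — but since the lemma is stated in the trapezoidal setting, I would phrase the proof in that context without further remark.
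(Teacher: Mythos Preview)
Your proof is correct and follows essentially the same approach as the paper: both argue the contrapositive by writing $w=u^n$ with $u$ primitive and $n>1$, identifying $u^{n-1}$ as the longest repeated prefix, and noting it occurs only as a prefix and as a suffix. The paper dispatches the two facts you spell out (no internal occurrence of $u^{n-1}$, and no longer repeated prefix) with a single appeal to ``elementary combinatorics on words,'' whereas you make the Fine--Wilf / block-position argument explicit; your observation that the trapezoidal hypothesis is unused is likewise implicit in the paper's proof.
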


\begin{proof}
 Suppose that $w$ is not primitive. Let $w=u^{n}$, for a non-empty primitive word $u$ and an integer $n>1$. The longest repeated prefix of $w$ is therefore $u^{n-1}$, which is also a suffix of $w$. Moreover, by elementary combinatorics on words, $u^{n-1}$ cannot have internal occurrences in $w$. Hence $w$ is closed.
\end{proof}

The converse of Lemma \ref{lem:oprim} does not hold. Indeed, there exist trapezoidal words that are primitive but not open, e.g. $w=aabaa$.


We now focus on closed trapezoidal words and their special factors.

\begin{lemma}\label{lem:clo1}
Let $w$ be a closed trapezoidal word and let $u$ be  the longest left special factor of $w$. Then $u$ is also the longest right special factor of $w$ (and thus $u$ is a bispecial factor of $w$). Moreover, $u$ is a central word.  
\end{lemma}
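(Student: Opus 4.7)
The plan is to prove the lemma in two stages: first show that $u$ is right special as well, so that $u=v$ and $u$ is bispecial; then deduce centrality from the Sturmian structure of $w$. The starting point is Lemma~\ref{cor:HRKL}: since $w$ is closed and trapezoidal, $L_{w}=R_{w}$, so the longest left special factor $u$ and the longest right special factor $v$ both have length $L_{w}-1$. By Proposition~\ref{prop:trap}, each length hosts at most one right special factor, so it suffices to show that $u$ itself is right special.

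For this I would argue by contradiction. Assume $u$ has a single right extension in $w$, say $uc$, so that every non-suffix occurrence of $u$ is followed by $c$. Left-specialness of $u$ supplies two occurrences of $u$ at positions $\ge 2$ preceded by distinct letters $a$ and $b$. If both of these occurrences are non-suffix, then $auc$ and $buc$ both belong to $\Fact(w)$, making $uc$ a left special factor of length $L_{w}$ and contradicting the definition of $L_{w}$. The remaining possibility is that one of the two occurrences, say the $a$-preceded one, coincides with the suffix position of $w$; if $u$ has any other $a$-preceded non-suffix occurrence we reduce to the previous situation, so assume that $au$ appears in $w$ only as a suffix. This forces $K_{w}\le L_{w}$, while the observation that the length-$(L_{w}-1)$ suffix of $w$ equals $u$ and is repeated gives $K_{w}\ge L_{w}$; combined with $H_{w}=K_{w}$ (closedness) and $|w|=L_{w}+H_{w}$, this identifies $u$ as both the longest repeated prefix and the longest repeated suffix of $w$. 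By the definition of closedness $u$ then has exactly two occurrences in $w$, as prefix and as suffix; but any $b$-preceded occurrence of $u$ is neither of these, contradicting left-specialness.

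Once $u$ is right special, the trapezoidal property forces $u=v$, so $u$ is bispecial in $w$. By Proposition~\ref{prop:cloStur}, $w$ is Sturmian, hence a factor of some infinite Sturmian word $s$; the bispeciality witnesses for $u$ pass from $\Fact(w)$ into $\Fact(s)$, so $u$ is bispecial in $s$ as well. The classical correspondence between bispecial factors of infinite Sturmian words and central words (see~\cite{DelMi94}) then yields that $u$ is central. The main obstacle is the edge case in the right-specialness argument, where $u$ is a suffix of $w$: it is precisely here that the closed hypothesis has to be exploited, via $H_{w}=K_{w}$ together with the rigid two-occurrence structure of the longest repeated prefix.
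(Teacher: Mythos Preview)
Your argument is correct and, for the bispeciality claim, tracks the paper's closely: both reduce to the edge case where one of $au,bu$ occurs only as a suffix, and both dispatch it by showing that $u$ then coincides with the longest repeated prefix and suffix of $w$, which by closedness has exactly two occurrences, contradicting the existence of a differently-preceded third occurrence. The paper phrases this via a direct case split on whether the longest repeated suffix $k$ has $|k|=|u|$ or $|k|>|u|$, while you compute $K_{w}=L_{w}=H_{w}$ numerically, but the substance is the same.

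Where you genuinely diverge is in establishing that $u$ is central. You embed $w$ in an infinite Sturmian word and invoke the classical identification of its bispecial factors with central words. The paper instead argues directly inside $w$: assuming $u$ is not a palindrome, it takes the longest prefix $z$ of $u$ with $za$ a prefix and $b\tilde z$ a suffix of $u$ ($a\neq b$), and uses bispeciality to extract the factors $aza$ and $b\tilde z b$, which violate balance and contradict Proposition~\ref{prop:cloStur}. Your route is shorter but leans on an external structural theorem; the paper's is self-contained and exhibits an explicit unbalanced pair.
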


\begin{proof}
Let $u$ be the longest left special factor of $w$. Hence there exist different letters $a,b\in \Sigma$ such that $au$ and $bu$ are factors of $w$. 

We claim that both $au$ and $bu$ occur in $w$ followed by some letter. Indeed, suppose the contrary. Then one of the words $au$ and $bu$, say $au$, appears in $w$ only as a suffix. Let $k$ be the longest repeated suffix of $w$. Since $u$ is a repeated suffix of $w$, we have $|k|\ge |u|$. If $|k|=|u|$, then $k=u$ and since $w$ is closed, $u$ appears in $w$ only as a prefix and as a suffix, against the hypothesis that $bu$ is a factor of $w$. So $|k|>|u|$ and therefore $au$ must be a suffix of $k$. This implies an internal occurrence of $au$ in $w$, a contradiction.

So, there exist letters $x,y$ such that $aux$ and $buy$ are factors of $w$. Now, we must have $x\neq y$, since otherwise $ux$ would be a left special factor of $w$ longer than $u$. Thus $u$ is right special in $w$. Since $w$ is closed, we have, by Lemma \ref{cor:HRKL}, $H_{w}=K_{w}$ and $L_{w}=R_{w}$, so $w$ cannot contain a right special factor longer than $u$. Thus $u$ is the longest right special factor of $w$.

By Proposition \ref{prop:cloStur}, $w$ is Sturmian. Since $u$ is a bispecial factor of $w$, and since a factor of a Sturmian word is a Sturmian word, in order to prove that $u$ is a central word it is sufficient to prove that $u$ is a palindrome. Suppose the contrary. So there exists a prefix $z$ of $u$ and a letter $a\in \Sigma$ such that $za$ is a prefix of $u$ and $b\tilde{z}$ is a suffix of $u$, for a letter $b$ different from $a$. Since $u$ is bispecial in $w$, this implies that $aza$ and $b\tilde{z}b$ are both factors of $w$. This implies that $w$ is not balanced, a contradiction with Proposition  \ref{prop:cloStur}.
\end{proof}

By Theorem \ref{theor:trappal}, trapezoidal palindromes coincide with Sturmian palindromes. Deep and interesting results on Sturmian palindromes can be found in \cite{DelDel05} and \cite{DelDel06a}. In particular, we want to recall the following

\begin{theorem}[\cite{DelDel06a}, Theorem 29]\label{theor:delucas}
 A palindrome $w\in \Sigma^{*}$ is Sturmian if and only if $\pi_{w}=R_{w}+1$.
\end{theorem}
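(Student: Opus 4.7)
The plan is to split the biconditional into two implications, each leveraging Proposition~\ref{prop:trap} combined with either the closedness of Sturmian palindromes (Theorem~\ref{theor:palclo}) or the general inequality from Proposition~\ref{prop:del}.

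For the forward direction, I would assume $w$ is a Sturmian palindrome. By Proposition~\ref{prop:sturmtrap}, $w$ is trapezoidal, and by Theorem~\ref{theor:palclo}, $w$ is closed; hence the longest repeated prefix of $w$ has length $H_w - 1$ and is also a suffix, so $|w| - H_w + 1$ is a period of $w$. Using Lemma~\ref{cor:HRKL} ($H_w = K_w$ for closed trapezoidal words) and Proposition~\ref{prop:trap} ($|w| = R_w + K_w$), this period simplifies to $R_w + 1$, yielding $\pi_w \le R_w + 1$. For the reverse inequality, I would argue that any period $p \le R_w$ would make the prefix of length $|w| - p \ge K_w$ coincide with the suffix of the same length, producing a suffix of length at least $K_w$ with two occurrences in $w$---a contradiction with the definition of $K_w$. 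Hence $\pi_w = R_w + 1$.

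For the backward direction, I would assume $w$ is a palindrome with $\pi_w = R_w + 1$ and $|\alph(w)| \ge 2$ (the unary case being trivial). Since every palindrome of length at least $2$ has $\pi_w \le |w| - 1$ (because $w_1 = w_{|w|}$), one has $|w| - R_w - 1 \ge 1$, so the period forces the nonempty prefix and suffix of length $|w| - R_w - 1$ to coincide; this suffix is therefore repeated, giving $K_w \ge |w| - R_w$, i.e., $R_w + K_w \ge |w|$. For the opposite inequality, I would invoke Proposition~\ref{prop:del}: strict monotonicity of $f_w$ on $[0, m_w]$ starting from $f_w(0) = 1$ gives $f_w(m_w) \ge m_w + 1$; nondecreasingness on $[m_w, M_w]$ gives $f_w(M_w) \ge m_w + 1$; and the slope-$(-1)$ decrease on $[M_w, |w|]$ together with $f_w(|w|) = 1$ gives $f_w(M_w) = |w| - M_w + 1$. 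Combining these yields $|w| \ge m_w + M_w = R_w + K_w$. Equality $|w| = R_w + K_w$ now follows, so $w$ is trapezoidal by Proposition~\ref{prop:trap}, and Theorem~\ref{theor:trappal} concludes that $w$ is a Sturmian palindrome.

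I do not foresee any substantial obstacle. The forward direction is a direct chain of invocations hinging on closedness (Theorem~\ref{theor:palclo}). The backward direction is surprisingly clean once one notices that the general inequality $|w| \ge R_w + K_w$ from Proposition~\ref{prop:del} precisely complements the period-induced $R_w + K_w \ge |w|$, so trapezoidality drops out without any case analysis on the positions of repeated factors.
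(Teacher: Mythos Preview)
The paper does not supply its own proof of Theorem~\ref{theor:delucas}; it is quoted from \cite{DelDel06a} and used as a black box. So there is nothing to compare against directly, but your argument has to be checked for circularity against the paper's dependency structure.

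Your forward direction is circular as written. You invoke Theorem~\ref{theor:palclo} to conclude that a Sturmian palindrome is closed, and then derive $\pi_w = R_w+1$ from closedness. But look at the paper's proof of Theorem~\ref{theor:palclo}: the step ``By Theorem~\ref{theor:delucas}, the period of $w$ is $R_w+1=|hxx|$, so we have $\tilde h = h$'' uses precisely the result you are trying to establish. Thus within this paper your forward implication assumes its own conclusion. To make the argument stand you would need an independent proof that Sturmian palindromes are closed (or a direct bound $\pi_w \le R_w+1$ for Sturmian palindromes that avoids Theorem~\ref{theor:palclo}). Note that your lower bound $\pi_w \ge R_w+1$ is fine: it only uses trapezoidality ($|w|=R_w+K_w$) and the definition of $K_w$, neither of which depends on Theorem~\ref{theor:delucas}.

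Your backward direction, by contrast, is correct and non-circular. The inequality $|w|\ge R_w+K_w$ extracted from Proposition~\ref{prop:del} is valid, the period hypothesis gives the reverse inequality via the repeated suffix of length $|w|-\pi_w$, and the appeal to Theorem~\ref{theor:trappal} is safe because the paper's proof of Theorem~\ref{theor:trappal} (through Theorem~\ref{theor:dal} and Lemma~\ref{lem:pk}) does not use Theorem~\ref{theor:delucas}.
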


The next theorem shows that Sturmian palindromes are all closed words. 

\begin{theorem}\label{theor:palclo}
 Let $w$ be a trapezoidal (Sturmian) palindrome. Then $w$ is closed.
\end{theorem}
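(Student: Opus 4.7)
The plan is to argue by contradiction: assume $w$ is open, force $w$ into the shape $w = hcch$ with $h$ a palindrome of length $H_w - 1$, and then derive a contradiction from the palindromy of $h$.

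First I would pin down all the relevant parameters. Palindromy of $w$ gives $H_w = K_w$, since the reversal of a repeated prefix of $w$ is a repeated suffix of $\tilde{w} = w$. Combining with the open-case identities of Lemma \ref{cor:HRKL} ($H_w = R_w$ and $K_w = L_w$) forces $H_w = K_w = R_w = L_w$; call this common value $t$. Proposition \ref{prop:trap} then gives $|w| = R_w + K_w = 2t$, and Theorem \ref{theor:delucas} (applicable because $w$ is a Sturmian palindrome) yields the minimal period $\pi_w = R_w + 1 = t + 1$. So the longest border of $w$ has length $|w| - \pi_w = t - 1$, which is exactly the length $H_w - 1$ of the longest repeated prefix $h$. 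Hence $h$ is also a suffix of $w$. Writing $w = h\alpha h$ with $|\alpha| = 2$, palindromy of $w$ forces $h = \tilde{h}$ and $\alpha = \tilde{\alpha} = cc$ for some letter $c$, so $w = hcch$ with $h$ a palindrome of length $t - 1$.

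Next I would exploit the openness. By Proposition \ref{prop:openspe}, $h$ is right special in $w$, and since the prefix occurrence of $h$ is followed by $w_t = c$, there must be another (necessarily internal) occurrence of $h$ at some position $i \in [2, t+1]$ followed by the opposite letter $\bar{c}$. The case $i = 2$ is ruled out at once, since the letter after such an occurrence is $w_{t+1} = c$, not $\bar{c}$. For $i \ge 3$, set $p = i - 1 \in [2, t]$; reading $w_{p+j} = h_j$ at $j = t + 1 - p$ (where $w_{p+j} = w_{t+1} = c$) gives $h_{t+1-p} = c$, while the letter $\bar{c}$ witnessing right-speciality sits at $w_{p+t}$ inside the suffix $h$-block, so $h_{p-1} = \bar{c}$. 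The palindromy of $h$ now yields $h_{p-1} = h_{t-(p-1)} = h_{t+1-p}$, whence $\bar{c} = c$, which is impossible. Therefore $w$ cannot be open, so $w$ is closed.

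The main obstacle is the structural first step that extracts the factorization $w = hcch$ with $h$ a palindrome. It requires combining Theorem \ref{theor:delucas} (which pins down the exact period), the trapezoidal identity $|w| = R_w + K_w$, and the palindromy-induced equality $H_w = K_w$, in order to show that the longest border of $w$ coincides with the longest repeated prefix. Once this shape is in hand, the contradiction is a one-line index computation using the palindromy relation $h_j = h_{t-j}$.
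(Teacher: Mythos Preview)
Your proof is correct and follows the paper's argument almost exactly through the reduction to $w=hcch$ with $h$ a palindrome: both proofs combine Lemma~\ref{cor:HRKL}, Proposition~\ref{prop:trap}, palindromy, and Theorem~\ref{theor:delucas} to reach this factorization. The only divergence is in the final contradiction. The paper observes that any occurrence of $h\bar{c}$ cannot be preceded by $c$ (since $ch$ is the non-repeated suffix of length $K_w$), hence $\bar{c}h\bar{c}$ is a factor, and together with the prefix $hcc$ this is an unbalanced pair --- contradicting Sturmianity. You instead run a direct index computation: an internal occurrence of $h$ followed by $\bar{c}$ forces $h_{p-1}=\bar{c}$ and $h_{t+1-p}=c$ at mirror positions of the palindrome $h$. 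Your endgame is arguably more self-contained in that it does not re-invoke the balance property, while the paper's version makes the role of Sturmianity more visible; both are short once the shape $w=hcch$ is in hand.
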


\begin{proof}
By contradiction, suppose that $w$ is open. Then $h$, the longest repeated prefix of $w$,  is also the longest right special factor of $w$ (Proposition \ref{prop:openspe}). Since $w$ is a palindrome, we have that the longest repeated suffix of $w$ is $\tilde{h}$, the reversal of $h$. In particular, then, $K_{w}=H_{w}$.

By Lemma \ref{cor:HRKL}, $H_{w}=R_{w}$ and $K_{w}=L_{w}$. Thus we have $H_{w}=R_{w}=K_{w}=L_{w}=|w|/2$, since $w$ is trapezoidal (see Proposition \ref{prop:trap}). It follows that $w=hxx\tilde{h}$, for a letter $x\in \Sigma$. 

By Theorem \ref{theor:delucas}, the period of $w$ is $R_{w}+1=|hxx|$, so we have $\tilde{h}=h$.  Therefore, we have $w=hxxh$. 

Since $h$ is right special in $w$, there exists a letter $y\neq x$ such that $hy$ is a factor of $w$. Now, any occurrence of $hy$ in $w$ cannot be preceded by the letter $x$, since $h=\tilde{h}$ is the longest repeated suffix of $w$. Thus $w$ contains the factor $yhy$. 

Hence, $w$ contains both $hxx$ and $yhy$ as factors, and this contradicts the fact that $w$ is Sturmian.
\end{proof}

\begin{remark}
 The equivalence between trapezoidal palindromes and Sturmian palindromes (Theorem \ref{theor:trappal}) can also be derived as a consequence of Theorem \ref{theor:palclo}, Proposition \ref{prop:cloStur} and Proposition \ref{prop:sturmtrap}.
\end{remark}

From Theorem \ref{theor:palclo} and Lemma \ref{lem:clo1}, we derive the following characterization of the special factors of Sturmian palindromes.

\begin{corollary}\label{cor:sturpal}
 Let $w$ be a trapezoidal (Sturmian) palindrome. Then the longest left special factor of $w$ is also the longest right special factor of $w$ and it is a central word.
\end{corollary}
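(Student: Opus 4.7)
The plan is essentially a one-line composition of two results already established in the paper, so the proof proposal is very short. First I would invoke Theorem \ref{theor:palclo}, which says that any trapezoidal (equivalently, Sturmian) palindrome $w$ is a closed word. This converts the hypothesis on $w$ into one that matches the hypothesis of Lemma \ref{lem:clo1}.

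Next I would apply Lemma \ref{lem:clo1} to the closed trapezoidal word $w$. That lemma gives exactly the two conclusions we need: the longest left special factor of $w$ is also the longest right special factor of $w$ (so in particular $w$ has a unique longest bispecial factor), and this factor is a central word. Nothing further needs to be said, since the conclusion of the corollary is verbatim the conclusion of Lemma \ref{lem:clo1}.

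There is no real obstacle here; the only conceptual work has already been done in proving Theorem \ref{theor:palclo} (which rules out the ``open'' case for palindromes by exploiting Theorem \ref{theor:delucas} and the balance property) and in proving Lemma \ref{lem:clo1} (which uses closedness to force the longest left special factor to also be right special, and then uses the Sturmian property from Proposition \ref{prop:cloStur} together with a palindromicity argument to upgrade the bispecial factor to a central word). So the corollary is a genuine corollary: I would simply write one or two sentences chaining Theorem \ref{theor:palclo} into Lemma \ref{lem:clo1}.
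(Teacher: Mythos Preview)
Your proposal is correct and matches the paper's approach exactly: the corollary is stated immediately after noting that it follows from Theorem~\ref{theor:palclo} and Lemma~\ref{lem:clo1}, and no separate proof is given. Your two-step chaining (palindrome $\Rightarrow$ closed via Theorem~\ref{theor:palclo}, then closed trapezoidal $\Rightarrow$ conclusion via Lemma~\ref{lem:clo1}) is precisely what the paper intends.
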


\begin{example}
 Let $w=aababaa$. The longest left special factor of $w$ is $aba$, which is also its longest right special factor and is a central word.
\end{example}

\section{Conclusions and Open Problems}

In this paper we distinguished trapezoidal words into two disjoint classes: open and closed. We derived some combinatorial and structural properties of these two classes of words.

Many further development directions can arise. For example, a challenging problem could be that of finding a characterization of open Sturmian words, that is, of Sturmian words for which the longest repeated prefix is also the longest right special factor (Problem \ref{prob:1}).

Another interesting problem concerns enumeration. Enumeration formulae for Sturmian words \cite{Mig91} and for primitive Sturmian words \cite{DelDel06} are known. To the best of our knowledge, an enumeration formula for trapezoidal words is not yet known. A possible direction for finding it could be enumerating open and closed trapezoidal words separately.

\section{Acknowledgments}

The author is grateful to anonymous referees for suggestions that greatly improved the presentation of the paper.

\bibliographystyle{eptcs}
\bibliography{trapezoidal}

\begin{thebibliography}{10}
\providecommand{\bibitemdeclare}[2]{}
\providecommand{\urlprefix}{Available at }
\providecommand{\url}[1]{\texttt{#1}}
\providecommand{\href}[2]{\texttt{#2}}
\providecommand{\urlalt}[2]{\href{#1}{#2}}
\providecommand{\doi}[1]{doi:\urlalt{http://dx.doi.org/#1}{#1}}
\providecommand{\bibinfo}[2]{#2}

\bibitemdeclare{incollection}{BuDelDel09}
\bibitem{BuDelDel09}
\bibinfo{author}{M.~Bucci}, \bibinfo{author}{A.~de~Luca} \&
  \bibinfo{author}{A.~De~Luca} (\bibinfo{year}{2009}):
  \emph{\bibinfo{title}{Rich and {P}eriodic-{L}ike {W}ords}}.
\newblock In \bibinfo{editor}{V.~Diekert} \& \bibinfo{editor}{D.~Nowotka},
  editors: {\sl \bibinfo{booktitle}{Developments in Language Theory, 13th
  International Conference, DLT 2009, Stuttgart, Germany, June 30--July 3,
  2009, Proceedings}}, {\sl \bibinfo{series}{Lecture Notes in Comput. Sci.}}
  \bibinfo{volume}{5583}, \bibinfo{publisher}{Springer}, pp.
  \bibinfo{pages}{145--155}, \doi{10.1007/978-3-642-02737-6\_11}.

\bibitemdeclare{article}{CaDel01a}
\bibitem{CaDel01a}
\bibinfo{author}{A.~Carpi} \& \bibinfo{author}{A.~de~Luca}
  (\bibinfo{year}{2001}): \emph{\bibinfo{title}{Periodic-like words,
  periodicity and boxes}}.
\newblock {\sl \bibinfo{journal}{Acta Informatica}} \bibinfo{volume}{37}, pp.
  \bibinfo{pages}{597--618}, \doi{10.1007/PL00013314}.

\bibitemdeclare{incollection}{CarDel05}
\bibitem{CarDel05}
\bibinfo{author}{A.~Carpi} \& \bibinfo{author}{A.~de~Luca}
  (\bibinfo{year}{2005}): \emph{\bibinfo{title}{Central {S}turmian {W}ords:
  Recent {D}evelopments}}.
\newblock In \bibinfo{editor}{C.~De~Felice} \& \bibinfo{editor}{A.~Restivo},
  editors: {\sl \bibinfo{booktitle}{Developments in Language Theory, 9th
  International Conference, DLT 2005, Palermo, Italy, July 4-8, 2005,
  Proceedings}}, {\sl \bibinfo{series}{Lecture Notes in Comput. Sci.}}
  \bibinfo{volume}{3572}, \bibinfo{publisher}{Springer}, pp.
  \bibinfo{pages}{36--56}, \doi{10.1007/11505877\_4}.

\bibitemdeclare{article}{Dal02}
\bibitem{Dal02}
\bibinfo{author}{F.~D'Alessandro} (\bibinfo{year}{2002}):
  \emph{\bibinfo{title}{A combinatorial problem on {T}rapezoidal words}}.
\newblock {\sl \bibinfo{journal}{Theoret. Comput. Sci.}}
  \bibinfo{volume}{273}(\bibinfo{number}{1-2}), pp. \bibinfo{pages}{11--33},
  \doi{10.1016/S0304-3975(00)00431-X}.

\bibitemdeclare{article}{GlJuWiZa09}
\bibitem{GlJuWiZa09}
\bibinfo{author}{A.~Glen}, \bibinfo{author}{J.~Justin},
  \bibinfo{author}{S.~Widmer} \& \bibinfo{author}{L.~Q. Zamboni}
  (\bibinfo{year}{2009}): \emph{\bibinfo{title}{Palindromic richness}}.
\newblock {\sl \bibinfo{journal}{European J. Combin.}} \bibinfo{volume}{30},
  pp. \bibinfo{pages}{510--531}, \doi{10.1016/j.ejc.2008.04.006}.

\bibitemdeclare{book}{LothaireAlg}
\bibitem{LothaireAlg}
\bibinfo{author}{M.~Lothaire} (\bibinfo{year}{2002}):
  \emph{\bibinfo{title}{Algebraic Combinatorics on Words}}.
\newblock \bibinfo{series}{Encyclopedia of Mathematics and its Applications},
  \bibinfo{publisher}{Cambridge Univ. Press}, \bibinfo{address}{New York, NY,
  USA}.

\bibitemdeclare{article}{Del99}
\bibitem{Del99}
\bibinfo{author}{A.~de~Luca} (\bibinfo{year}{1999}): \emph{\bibinfo{title}{On
  the combinatorics of finite words}}.
\newblock {\sl \bibinfo{journal}{Theoret. Comput. Sci.}} \bibinfo{volume}{218},
  pp. \bibinfo{pages}{13--39}, \doi{10.1016/S0304-3975(98)00248-5}.

\bibitemdeclare{incollection}{DelDel05}
\bibitem{DelDel05}
\bibinfo{author}{A.~de~Luca} \& \bibinfo{author}{A.~De~Luca}
  (\bibinfo{year}{2005}): \emph{\bibinfo{title}{Palindromes in {S}turmian
  {W}ords}}.
\newblock In \bibinfo{editor}{C.~De~Felice} \& \bibinfo{editor}{A.~Restivo},
  editors: {\sl \bibinfo{booktitle}{Developments in Language Theory, 9th
  International Conference, DLT 2005, Palermo, Italy, July 4-8, 2005,
  Proceedings}}, {\sl \bibinfo{series}{Lecture Notes in Comput. Sci.}}
  \bibinfo{volume}{3572}, \bibinfo{publisher}{Springer}, pp.
  \bibinfo{pages}{199--208}, \doi{10.1007/11505877\_18}.

\bibitemdeclare{article}{DelDel06a}
\bibitem{DelDel06a}
\bibinfo{author}{A.~de~Luca} \& \bibinfo{author}{A.~De~Luca}
  (\bibinfo{year}{2006}): \emph{\bibinfo{title}{Combinatorial properties of
  Sturmian palindromes}}.
\newblock {\sl \bibinfo{journal}{Internat. J. Found. Comput. Sci.}}
  \bibinfo{volume}{17}(\bibinfo{number}{3}), pp. \bibinfo{pages}{557--573},
  \doi{10.1142/S0129054106003991}.

\bibitemdeclare{article}{DelDel06}
\bibitem{DelDel06}
\bibinfo{author}{A.~de~Luca} \& \bibinfo{author}{A.~De~Luca}
  (\bibinfo{year}{2006}): \emph{\bibinfo{title}{Some characterizations of
  finite {S}turmian words}}.
\newblock {\sl \bibinfo{journal}{Theoret. Comput. Sci.}} \bibinfo{volume}{356},
  pp. \bibinfo{pages}{118--125}, \doi{10.1016/j.tcs.2006.01.036}.

\bibitemdeclare{article}{DelGlZa08}
\bibitem{DelGlZa08}
\bibinfo{author}{A.~de~Luca}, \bibinfo{author}{A.~Glen} \&
  \bibinfo{author}{L.~Q. Zamboni} (\bibinfo{year}{2008}):
  \emph{\bibinfo{title}{Rich, {S}turmian, and Trapezoidal words}}.
\newblock {\sl \bibinfo{journal}{Theoret. Comput. Sci.}} \bibinfo{volume}{407},
  pp. \bibinfo{pages}{569--573}, \doi{10.1016/j.tcs.2008.06.009}.

\bibitemdeclare{article}{DelMi94}
\bibitem{DelMi94}
\bibinfo{author}{A.~de~Luca} \& \bibinfo{author}{F.~Mignosi}
  (\bibinfo{year}{1994}): \emph{\bibinfo{title}{Some combinatorial properties
  of {S}turmian words}}.
\newblock {\sl \bibinfo{journal}{Theoret. Comput. Sci.}}
  \bibinfo{volume}{136}(\bibinfo{number}{2}), pp. \bibinfo{pages}{361--385},
  \doi{10.1016/0304-3975(94)00035-H}.

\bibitemdeclare{article}{Mig91}
\bibitem{Mig91}
\bibinfo{author}{F.~Mignosi} (\bibinfo{year}{1991}): \emph{\bibinfo{title}{On
  the number of factors of {S}turmian words}}.
\newblock {\sl \bibinfo{journal}{Theoret. Comput. Sci.}} \bibinfo{volume}{82},
  pp. \bibinfo{pages}{71--84}, \doi{10.1016/0304-3975(91)90172-X}.

\end{thebibliography}

\end{document}